\global\def\real{\mathbb R}
\global\def\expect{\mathbb E}
\global\def\var{\mathrm{var}}
\global\def\t{{\scriptscriptstyle\top}}
\global\def\prob{\mathbb P}
\global\def\stiefel{\mathcal S_{d,k}}
\global\def\data{\mathcal D}
\global\def\distconverge{\stackrel{d}{\to}}
\global\def\bp{\mathbf{p}}
\global\def\integer{\mathbb N}
\global\def\lop{\mathcal L}
\global\def\nx{n_x}
\def\nxtest{\bar{n}_{x}}
\global\def\nxfit{\tilde{n}_{x}}
\global\def\ny{n_y}
\def\nm{{n}}
\def\nplusm{n}
\global\def\nyfit{\tilde{n}_{y}}
\global\def\nytest{\bar{n}_{y}}
\def\mparam{m} 
\def\evmin{\Lambda_{\min}} 
\def\evmax{\Lambda_{\max}} 
\def\dataXfit{\mathcal X_{fit}}
\def\dataYfit{\mathcal Y_{fit}}
\def\IXfit{\tilde{\mathcal I}_X}
\def\IYfit{\tilde{\mathcal I}_Y}
\def\IXtest{\mathcal I_X}
\def\IYtest{\mathcal I_Y}
\def\mufit{\mu_n}
\def\nufit{\nu_n}
\def\kmax{k_\circ}
\def\pwrate{R}
\global\def\sparsestiefel{\mathcal{S}_{d,k, \tau}^{(1)}}
\global\def\sparsestiefelstar{\mathcal{S}_{d,k_\star, \tau_\star}^{(1)}}
\global\def\zerosparsestiefel{\mathcal{S}_{d,k, \varpi}^{(0)}}
\global\def\zerosparsestiefelstar{\mathcal{S}_{d,k_\star, \varpi_\star}^{(0)}}
\def\xytrunc{\theta_n}
\newtheorem{theorem}{Theorem}
\newtheorem{lemma}{Lemma}
\newtheorem{corollary}{Corollary}
\newtheorem{remark}{Remark}
\newtheorem{assumption}{Assumption}
\newcommand{\blind}{1}
\begin{document}
	
\date{}

\def\spacingset#1{\renewcommand{\baselinestretch}%
	{#1}\small\normalsize} \spacingset{1}


\if1\blind
{
	\title{\bf Neural Wasserstein Two-Sample Tests}
	\author{Xiaoyu Hu \\
	Department of Statistics and Data Science, Xi'an Jiaotong University \\
	and \\
	Zhenhua Lin \\
	Department of Statistics and Data Science, National Unversity of Singapore}
	\maketitle
} \fi

\if0\blind
{
	\bigskip
	\bigskip
	\bigskip
	\begin{center}
		{\LARGE\bf Neural Wasserstein Two-Sample Tests}
	\end{center}
	\medskip
} \fi

\bigskip

\begin{abstract}
	The two-sample homogeneity testing problem is fundamental in statistics and becomes particularly challenging in high dimensions, where classical tests can suffer substantial power loss.
	We develop a learning-assisted procedure based on the projection $1$-Wasserstein distance, which we call the \emph{neural Wasserstein test}.
	The method is motivated by the observation that there often exists a low-dimensional projection under which the two high-dimensional distributions differ.
	In practice, we learn the projection directions via manifold optimization and a witness function using deep neural networks. 
	To adapt to unknown projection dimensions and sparsity levels, we aggregate a collection of candidate statistics through a max-type construction, avoiding explicit tuning while potentially improving power.
	We establish the validity and consistency of the proposed test and prove a Berry--Esseen type bound for the Gaussian approximation.
	In particular, under the null hypothesis, the aggregated statistic converges to the absolute maximum of a standard Gaussian vector, yielding an asymptotically pivotal (distribution-free) calibration that bypasses resampling.
	Simulation studies and a real-data example demonstrate the strong finite-sample performance of the proposed method.
\end{abstract}
	
\noindent%
{\it Keywords:}  deep neural network, max statistic, high dimensions, sample splitting, test of homogeneity, Wasserstein distance
\vfill

\newpage
\spacingset{1.9} 

\section{Introduction}
\label{sec:intro}
The problem of the two-sample homogeneity hypothesis test is of fundamental importance in statistics and practice \citep{wald1940test,smirnov1948table,anderson1962distribution,bickel1969distribution,friedman1979,kim2020robust}, in which one aims to  determine whether two independent samples are sampled from the same distribution. This test finds wide application in various fields, such as genomics \citep{chen2010two} and finance \citep{horvath2013estimation}. 
Moreover, this test plays a crucial role in transfer learning, where it is often used to identify the nature of distributional shifts and to guide the selection of appropriate learning strategies \citep{lipton2018detecting,rabanser2019failing}. 
With the increasing prevalence of high-dimensional data, there is an increasing demand for methods tailored to such data. Our primary focus is on devising a valid and powerful test suitable for scenarios with diverging dimensions.

For the two-sample homogeneity test, one popular approach is based on kernel-based statistics \citep{sejdinovic2013equivalence}, including maximum mean discrepancy (MMD) \citep{gretton2006kernel} and energy distance \citep{szekely2004}. 
Due to their simplicity and wide applicability, these kernel-based methods, especially MMD, have attracted much attention in both methodological and theoretical developments \citep{gretton2012kernel,gretton2012optimal,sutherland2017generative,kirchler2020two}. 
To improve statistical power, \citet{liu2020learning} proposed a method involving the learning of a kernel function parameterized by deep neural nets on the training data, followed by the implementation of an MMD test using the learned kernel function and permutation on the test data. 
Additionally, to avoid permutations, \citet{kubler2022witness,shekhar2022permutation} utilized sample splitting to learn a one-dimensional function related to the kernel and construct a test statistic with a standard Gaussian limiting null distribution. 
These approaches are not particularly developed for high-dimensional contexts, which may result in limited power in high dimensions; see Section \ref{sec:sim}.
In high-dimensional settings, \citet{zhu2021interpoint} investigated the performance of kernel-based tests with permutation. \citet{gao2023two,yan2023kernel} introduced studentized MMD and studied the asymptotic properties as the dimension and the sample size diverge. {Their theoretical results highlight intrinsic limitations of kernel-based methods in power under high-dimensional regimes.}
Moreover, \citet{ramdas2015decreasing} demonstrated that the power of MMD can decay rapidly as the dimension increases under fair alternatives, a phenomenon that we also observe for other comparative methods in Section \ref{sec:sim}.

In contrast to kernel-based statistics, the Wasserstein distance has emerged as a powerful tool for measuring the discrepancies between distributions, thanks to its ability to preserve the underlying geometry of the distribution space, which has achieved success in many tasks, such as generative modeling \citep{arjovsky2017wasserstein}. 
%
However, the curse of dimensionality hinders its application in statistical estimation and hypothesis testing problems. For example, \citet{imaizumi2022hypothesis} attempted to employ the Wasserstein distance in the two-sample test, but the convergence rate therein deteriorates rapidly with increasing dimensionality. To mitigate the curse of dimensionality, several variants of the Wasserstein distance are proposed, including the smoothed Wasserstein distance \citep{goldfeld2020gaussian}, sliced or max-sliced Wasserstein distance and projection Wasserstein distance \citep{rabin2011wasserstein,deshpande2019max,lin2020projection}. In the case of low and fixed dimensions, there are works studying distributional limits of these variants and their applications in the two-sample test \citep{nietert2021smooth,goldfeld2022,sadhu2022,xi2022distributional}. 
However, these results do not extend to high-dimensional regimes.
The exploration of Wasserstein-based methods for high-dimensional two-sample testing remains relatively limited, apart from a small number of recent studies.
\citet{wang2021two} adopted the projected Wasserstein distance as a test statistic and employed a permutation test in practice. In parallel to our development, \citet{Hu2025} proposed a test based on the max-sliced Wasserstein distance combined with bootstrapping. 
However, these approaches can suffer from relatively low power due to the use of a single projection dimension and substantial computational overhead arising from resampling.

\paragraph*{Our contributions}
We develop a new learning-assisted and Wasserstein-based two-sample testing framework for high-dimensional data, with the following main contributions.
\begin{itemize}
	\item \textit{Projection-Wasserstein test statistic with deep neural network approximation.} Building on the projection 1-Wasserstein distance, we propose a two-sample test that learns (i) discriminative projection directions on the Stiefel manifold and (ii) an approximate 1-Lipschitz witness function using deep neural networks.
	\item \textit{Tuning-free adaptivity via max-type aggregation.} To adapt to unknown projection dimensions and sparsity levels, we aggregate statistics over a candidate set of hyperparameters using a max-type statistic, avoiding delicate tuning while potentially improving power. Theoretical power analysis under local alternatives are provided for the proposed test with and without sparsity regularizations.
	\item \textit{Pivotal limiting null and nonasymptotic guarantees.} We establish a Berry--Esseen type bound for Gaussian approximation and show that the aggregated test has a pivotal limiting null distribution, regardless of the ambient dimension and underlying distributions. This pivotal structure yields substantial computational savings compared to resampling-based methods such as permutation and bootstrap.
\end{itemize}

Specifically, the proposed test leverages the Kantorovich--Rubinstein duality to express the projection 1-Wasserstein distance 
between the distributions of $X, Y \in \real^d$ as
\begin{equation}\label{eq:PWD-dual}
	\sup_{f \in \mathcal F, U \in \stiefel} \{\expect f(U^\t X) - \expect f(U^\t Y)\},
\end{equation}
where $k$ is the projection dimension, $\mathcal F = \{ f: \real^k \to \real \mid f \text{ is 1-Lipschitz, } f(0)=0  \}$ and $\stiefel = \{U \in \real^{d\times k} : U^{\t}U = I_k \}$. In contrast with existing works \citep[e.g.,][]{wang2021two} utilizing a similar distance but bypassing estimation of the optimal $f$, 
our strategy is to estimate a discriminative function $f$ and  projection directions $U$ that achieve the supremum of \eqref{eq:PWD-dual}, and then construct a test statistic based on the estimates; 
our theoretical and numeric studies demonstrate the strength of this strategy. {Further distinctions between our approach and the method proposed by \citet{wang2021two} are discussed in Remark \ref{rmk:diff_wang}.}

Practically, to decouple the dependence caused by estimation, we employ data splitting to divide the available dataset into two disjoint subsets. The discriminative function and directions are trained on one subset, and the test statistic is constructed using the other subset.  The training phase involves estimating the projection directions $U$ and the 1-Lipschitz function $f$. However, optimizing both $f$ and $U$ jointly turns out to be too complex to achieve satisfactory numerical performance. To address this issue, we propose a practically effective two-step algorithm in which the optimization procedures for $U$ and $f$ are separated.
{For $U$, we consider both unregularized estimation and regularized estimation under $\ell_1$ and $\ell_0$ sparsity constraints to achieve adaptivity to unknown sparsity levels.} The non-smoothness and the manifold constraint motivate us to adopt the proximal gradient method for manifold optimization to obtain the discriminative $U$ \citep{chen2020proximal}.
For $f$, motivated by the excellent approximation power of deep neural networks, we adopt deep neural networks (where the number of layers grows with the sample size) to learn the discriminative function. 


The proposed test features several advantages. First, unlike data-driven calibration methods such as bootstrapping and permutation \citep{zhu2021interpoint,wang2021two},  the proposed pivotal test statistic enables a scalable computation. Second, in contrast with the studentized MMD,  {of which the  limiting null distributions depend on both  diverging dimension and sample size} \citep[see][for detailed discussions]{gao2023two,yan2023kernel}, our result holds irrespective of the dimensionality, underlying population or training algorithms; see Figures S.1 and S.2 for illustration. In practice, it could be challenging to determine whether the dimensionality is sufficiently large for applying the asymptotic null distribution of the studentized MMD test. 
Third, the proposed approach {adapts to unknown projection dimensions and sparsity levels through appropriate aggregation, thereby eliminating the intricate step of selecting the projection dimension and the sparsity parameter. It} has the potential to effectively capture the disparities between distributions via integrating information across multiple projection directions and regularization parameters in the proposed max-type statistic. The numerical results in Figure S.6 showcase the benefits of utilizing additional projections in some challenging scenarios. 

The rest of the paper is organized as follows. {In Section \ref{sec:statistic}, we introduce the proposed procedure for constructing the test statistic and analyse its size and power performance without regularization. Computation details and theoretical performance under $\ell_1$ and $\ell_0$ regularization are studied in Sections \ref{sec:l1} and \ref{sec:l0}, respectively.} Simulation studies are presented in Section \ref{sec:sim}, followed by the analysis of a real dataset in Section \ref{sec:real-data}. A brief concluding remark is provided in Section \ref{sec:conclusion}. More algorithmic details, simulation results and technical details are gathered in the Appendix and the Supplementary Material. 

\textbf{Notation.}
For a random variable $\xi$, the $\psi_2$-Orlicz norm is $\|\xi\|_{\psi_2} = \inf\{ t> 0, \expect(|\xi|^2/t^2) \le 1 \}$. 
For a deterministic vector $v \in \real^d$ and $0<p<\infty$, the $l_p$ norm is $\|v\|_p = (\sum_{j=1}^d |v_j|^p)^{1/p}$. Moreover, we write $\|v\|_\infty = \max_{j=1,\dots,d}|v_j|$ and $\|v\|_0 = |\{j: v_j \neq 0\}|$, where $|E|$ is the cardinality of a set $E$.  
For a matrix $A = (A_{ij}) \in \real^{d \times p}$, we write $\|A\|_F = (\sum_{i,j} A_{ij}^2)^{1/2}$, $\|A\|_2 = (\max_{\|v\|_2=1} v^\t AA^\t v)^{1/2} $, $\|A\|_\infty = \max_{i,j} |A_{ij}|$, $\|A\|_1 = \sum_{i,j} |A_{ij}|$ and $\|A\|_0 = |\{(i,j): A_{ij} \neq 0 \}|$. 
Let $\evmin(A)$ and $\evmax(A)$ represent the smallest and largest eigenvalues of a symmetric matrix $A$ respectively.
Let $B_r(x)$ denote the open ball of radius $r$ centered at $x$.
For a function $h: \real^d \to \real$, let $\|h\|_\infty = \sup_{x\in \real^d}|h(x)|$. {If for any $x, y \in \real^d$, $|h(x)-h(y)| \le \|x-y\|_2$, then $h$ is a 1-Lipschitz function.}
For two sequences $\{a_n\}$ and $\{b_n\}$ of non-negative real numbers, write $a_n \lesssim b_n$ if there is a constant $c>0$ and an integer $n_0 \ge 1$, not depending on $n$, such that $a_n \le cb_n$ for all $n \ge n_0$. Write $a_n \asymp b_n$ if $a_n \lesssim b_n$ and $b_n \lesssim a_n$.
The notation $a_n \ll b_n$ means that $a_n/b_n \to 0$ as $n \to \infty$.
For a set $\data$, we use $|\data|$ to denote its cardinality.

\section{A distribution-free test statistic}
\label{sec:statistic}

\subsection{Projection Wasserstein distance and two-sample tests} 
\label{subsec:pw}

Let $\mathcal P(\real^d)$ be the set of Borel probability measures on $\real^d$ with the finite first moment.
According to Definition 6.1 of \cite{villani2008optimal}, the 1-Wasserstein distance (hereafter, Wasserstein distance) between two probability measures $\mu$ and $\nu$ in $\mathcal P(\real^d)$ is 
\begin{equation*}\label{eq:wd-ot}
	W(\mu, \nu) = \inf_{\pi \in \Gamma(\mu, \nu)} \int \|x-y\|_2 d\pi(x,y) , 
\end{equation*}
where $\Gamma(\mu,\nu)$ represents the set of joint distributions on $\real^d \times \real^d$ with marginals $\mu$ and $\nu$. 

The Wasserstein distance, also known as the optimal transport distance, characterizes the minimal cost required to transport one distribution to another. It preserves the geometry of the space of distributions and finds wide applications in statistics and machine learning. 
Despite advantages of the Wasserstein distance, its empirical estimation suffers from the curse of dimensionality \citep{dudley1969speed,fournier2015rate,lei2020convergence}. 
One popular strategy to alleviate this issue is through projection \citep[e.g.,][]{lin2020projection,niles2022estimation}.

For $\mathscr X, \mathscr Y \subset \real^d$, $T: \mathscr X \to \mathscr Y$ and $\mu \in \mathcal P(\mathscr X)$, let $T_{\#}\mu$ denote the push-forward of $\mu$ by $T$, i.e., $T_{\#}\mu(A) = \mu(T^{-1}(A)) $ for any Borel set $A$ in $\mathscr Y$. Also, each element $U \in \stiefel$ is viewed as  a linear function $U(x) = U^\t x$ for $x \in \real^d$.  The $k$-dimensional projection  Wasserstein distance \citep{paty2019subspace,niles2022estimation} between $\mu$ and $\nu$ is 
\begin{equation}\label{eq:primal}
	PW_k(\mu, \nu) = \sup_{U\in \stiefel} W(U_{\#}\mu, U_{\#}\nu) = \sup_{U \in \stiefel} \inf_{\pi \in \Gamma(\mu, \nu)} \int \|U^\t(x-y)\|_2 d\pi(x,y). 
\end{equation}
The following Lemma \ref{lem:properties}, due to  \citet{paty2019subspace}, asserts that $PW_k$ is a proper distance over $\mathcal P(\real^d)$. 
\begin{lemma}[Paty and Cuturi, 2019]\label{lem:properties} The $PW_k$ is a distance over $\mathcal P(\real^d)$. Also, for $\mu, \nu \in \mathcal P(\real^d)$, there exists a $U_0 \in \stiefel$ such that 
	\[ PW_k(\mu, \nu) = W(U_{0\#}\mu, U_{0\#}\nu). \]
\end{lemma}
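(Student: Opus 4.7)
The plan is to verify the four metric axioms for $PW_k$ and then argue that the supremum in the definition of $PW_k(\mu,\nu)$ is attained via a compactness-continuity argument on the Stiefel manifold. Three axioms transfer immediately from the fact that $W$ is a metric on $\mathcal{P}(\real^k)$: non-negativity is obvious; symmetry follows because $W(U_{\#}\mu, U_{\#}\nu) = W(U_{\#}\nu, U_{\#}\mu)$ for every fixed $U$; and the triangle inequality holds because, for each $U \in \stiefel$,
\[
W(U_{\#}\mu, U_{\#}\rho) \le W(U_{\#}\mu, U_{\#}\nu) + W(U_{\#}\nu, U_{\#}\rho) \le PW_k(\mu,\nu) + PW_k(\nu,\rho),
\]
after which we take the supremum over $U$ on the left.

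The substantive part of the metric verification is the identity of indiscernibles. If $PW_k(\mu,\nu) = 0$, then $W(U_{\#}\mu, U_{\#}\nu) = 0$, hence $U_{\#}\mu = U_{\#}\nu$ for every $U \in \stiefel$. For an arbitrary unit vector $v \in \real^d$, I would extend $v$ to an orthonormal $k$-frame $U = (v, u_2, \dots, u_k) \in \stiefel$ (possible because $k \le d$); equality of the pushforwards implies equality of the first one-dimensional marginal, yielding $v_{\#}\mu = v_{\#}\nu$. Since this holds for every direction $v$, the Cram\'er--Wold device (equivalently, Fourier inversion via one-dimensional characteristic functions) forces $\mu = \nu$.

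For existence of $U_0$, I would invoke compactness of $\stiefel$ (a closed and bounded subset of $\real^{d\times k}$) together with continuity of $U \mapsto W(U_{\#}\mu, U_{\#}\nu)$. The cleanest route to continuity is the coupling bound: for $X \sim \mu$, the pair $(U_n^{\t} X, U^{\t} X)$ is a coupling of $U_{n\#}\mu$ and $U_{\#}\mu$, giving
\[
W(U_{n\#}\mu, U_{\#}\mu) \le \expect\|(U_n - U)^{\t} X\|_2 \le \|U_n - U\|_2 \int \|x\|_2\, d\mu(x),
\]
which vanishes as $U_n \to U$ because $\mu$ has finite first moment. The analogous bound for $\nu$ and the triangle inequality for $W$ yield joint continuity of the map. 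A continuous function on a compact set attains its maximum, so some $U_0 \in \stiefel$ achieves the supremum. The main obstacle, modest as it is, sits in the identity of indiscernibles step: one must convert pushforward equality on $k$-frames into pushforward equality on all one-dimensional directions and then invoke Cram\'er--Wold; everything else is a direct transfer of properties of $W$ or a standard compactness argument.
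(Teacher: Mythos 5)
The paper does not prove this lemma itself but states it as a citation to Paty and Cuturi (2019); your reconstruction is correct and follows essentially the same route used there — the three easy metric axioms transfer directly from $W$, definiteness comes from Cram\'er--Wold after extracting one-dimensional marginals from $k$-frame pushforwards, and attainment follows from compactness of $\stiefel$ together with the coupling/continuity bound you give. One minor remark: your Lipschitz estimate $W(U_{n\#}\mu,U_{\#}\mu)\le \|U_n-U\|_2\int\|x\|\,d\mu$ in fact shows the map $U\mapsto W(U_{\#}\mu,U_{\#}\nu)$ is globally Lipschitz, which is a little more than continuity, but the compactness argument works either way.
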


From Lemma \ref{lem:properties}, we conclude that $PW_k(\mu, \nu)$ vanishes if and only if $\mu=\nu$. Moreover, the supremum in \eqref{eq:primal} can be achieved, and we call $U_0$ an optimal projection.

{The projection Wasserstein distance $PW_k(\mu, \nu)$ is the same as the Wasserstein distance $W(\mu, \nu)$ if the two distributions $\mu$ and $\nu$ satisfy the spiked transport model with the subspace dimension $k$, see Proposition 3 in \citet{niles2022estimation}.}

Given two independent random samples in $\real^d$, $X_1, \dots X_{\nx} \stackrel{i.i.d.}{\sim} \mu$ and $Y_1, \dots, Y_{\ny} \stackrel{i.i.d.}{\sim} \nu$, the goal is to test the hypothesis
\[ H_0 : \mu=\nu\qquad\text{versus}\qquad H_1: \mu \neq \nu, \]
where $\mu, \nu \in \mathcal P(\real^d)$ are two underlying distributions. In light of Lemma \ref{lem:properties}, the null hypothesis is equivalent to $PW_k(\mu, \nu)=0$, based on which one can devise a test by using the sample version of $PW_k(\mu, \nu)$ \citep[e.g.,][]{wang2021two}. One difficulty of such test is to accurately estimate the p-value or critical value.

\subsection{Overview of the proposed new two-sample test}
\label{subsec:method}
Instead of directly adopting the distance of \eqref{eq:primal} and its sample version, we first utilize Lemma \ref{lem:properties} and the Kantorovich-Rubinstein theorem \citep{villani2003topics} to observe
\begin{equation}\label{eq:U0-dual}
	PW_k(\mu, \nu) =  W(U_{0\#}\mu, U_{0\#}\nu)= \sup_{f\in \mathcal F} \{\expect f(U_0^{\t}X) - \expect f(U_0^{\t}Y)\}, 
\end{equation}
where $X \sim \mu, Y \sim \nu$, $\mathcal F = \{ f: \real^k \to \real \mid f \text{ is 1-Lipschitz, } f(0)=0  \}$ and 
$U_0$ is the optimal projection in Lemma \ref{lem:properties}.
If $PW_k(\mu, \nu)\neq 0$,  
then $\sup_{f\in \mathcal F} \{\expect f(U_0^{\t}X) - \expect f(U_0^{\t}Y)\}\neq 0$ and consequently there exists a function $f_0\in \mathcal F$ such that $\expect g_0(X) - \expect g_0(Y) \neq  0$ for $g_0=f_0\circ U_0$. Conceptually, we would choose $f_0$ such that $\expect g_0(X) - \expect g_0(Y)$ is as close to $PW_k(\mu, \nu)$ as possible, as discussed in Appendix \ref{apx:f0}. It is intuitive to construct a test statistic based on the sample version of $\sup_{f\in \mathcal F} \{\expect f(U_0^{\t}X) - \expect f(U_0^{\t}Y)\}$ and estimates of $U_0$.  Moreover, the representation in \eqref{eq:U0-dual} motivates an effective two-step procedure, where $U_0$ is estimated first, followed by the estimation of $f_0$. The full estimation procedure is detailed in Section~\ref{subsec:alg}, and the resulting test statistic is presented in Section~\ref{subsec:max-statistic}. It turns out that, estimating the null distribution of this test statistic is much more tractable, making it easier to compute the p-value or critical value.

There are two difficulties in finding an estimate $\hat f$ of the oracle discriminative function $f_0$ and an estimate $\hat U$ of  the direction $U_0$ from observed data. 
First, the Lipschitz class is too large to ensure the estimation accuracy.
Second, it is hard to construct a pivotal test statistic due to the complex correlation among the quantities $\hat f(\hat U^\t X_i)$ and $\hat f(\hat U^\t Y_j)$ for $i=1,\dots, \nx$ and $j=1, \dots, \ny$.  
To tackle these issues, we resort to a suitably chosen function class for estimation and utilize the sample splitting technique to decouple the dependence. Below we elaborate these ideas.

For a suitable function class, we consider estimating $f_0$ using neural networks with rectified linear unit (ReLU) activation functions, which have successful applications in various fields and have gained increasing attention in statistics \citep{schmidt2020nonparametric,zhong2022cox,fan2023,yan2023npdnn}. Specifically, consider the network architecture $(D, \bp)$ consisting of $D \in \integer$ hidden layers and a width vector $\bp=(p_0, \dots, p_{D+1}) \in \integer^{D+2}$. Let $A_{\ell} \in \real^{p_{\ell+1} \times p_\ell}$ and $b_{\ell} \in \real^{p_{\ell+1}}$ be a weight matrix and a bias vector, respectively, for the $\ell$-th layer. Let $\sigma(x) = \max(x, 0)$ be the ReLU activation function. A neural network with the network architecture $(D, \bp)$ is any function of the following form
\begin{equation}\label{eq:func-nn}
\phi : \real^{p_0} \to \real^{p_{D+1}}, \quad \phi(x)  =  \lop_D \circ \sigma \circ \lop_{D-1} \circ \sigma \cdots \sigma \circ \lop_1 \circ \sigma \circ \lop_0(x),
\end{equation} 
where $\lop_\ell(x) = A_\ell x + b_\ell$ for $\ell =0, \dots D.$ The width $L$ is defined as the maximum width of the hidden layers, i.e., $L = \max\{p_1, \cdots, p_D\}$. 
In this paper, we take the following functional class
\begin{align*}
\mathcal N(D, \bp, s, B) := \bigg\{ \phi \text{ of the form \eqref{eq:func-nn}}: & \max_{\ell=0, \dots, D} \|A_\ell\|_\infty \vee \|b_\ell\|_\infty \le 1, \\
& \sum_{\ell=0}^D \|A_\ell\|_0 + \|b_\ell\|_0 \le s, \|\phi\|_\infty \le B    \bigg\}.
\end{align*}

To overcome the complex correlation among the estimators, we employ sample splitting to decouple the dependence. Specifically, we randomly partition the indices $1,\ldots,\nx$ into disjoint $\IXfit$ and $\IXtest$ such that $\{1,\ldots,\nx\}=\IXfit\cup\IXtest$, and partition the indices $1,\ldots,\ny$ into $\IYfit$ and $\IYtest$ such that $\{1,\ldots,\ny\}=\IYfit\cup\IYtest$. With these partitions, we estimate $f_0$ and $U_0$ by using the observations $\data_{fit} = \dataXfit \cup \dataYfit$, where $\dataXfit = \{ X_i, i \in \IXfit\}$ and $\dataYfit = \{ Y_j, j \in \IYfit\}$, and then construct a test statistic based on the observations $\data_{test} = \{X_i, i \in \IXtest\} \cup \{Y_j, j \in \IYtest\}$. This sample-splitting strategy ensures that the estimators $\hat{f}$ and $\hat{U}$, obtained from the training subset, are independent of the testing subset used to construct the test statistic. Consequently, conditional on $\hat{f}$ and $\hat{U}$, the key quantity $\hat{S}_n$ (to be defined in \eqref{eq:Sng}) becomes a weighted average of independent random variables. This decoupling of the complex dependence between the estimators and the test statistic is crucial for establishing the validity of our test and enables a rigorous power analysis.

We are now ready to present the main idea of the proposed approach. By using the fitting data $\data_{fit}$, we estimate $U_0$ and employ deep neural networks in $\mathcal N(D, \bp, s, B)$ to estimate  $f_0$; details are provided in Section \ref{subsec:alg}.	
Once we obtain the estimator $\hat g = \hat f \circ \hat U$, we  construct the test statistic on the rest of the data $\data_{test}$, which is detailed in Section \ref{subsec:max-statistic}.


\subsection{An effective algorithm for computing $(\hat f,\hat U)$}
\label{subsec:alg}
The representation in \eqref{eq:U0-dual} motivates us to estimate $U_0$ and $f_0$ separately in two steps. In the first step, by utilizing the primal formulation in \eqref{eq:primal}, we estimate $U_0$ without involving the function space of $f_0$.
After obtaining $\hat U$, in the second step we train the discriminative function by fitting a deep neural network on the projected data $\{\hat U^\t X: X\in\dataXfit\}$ and $\{\hat U^\t Y: Y\in \dataYfit\}$. Below we outline these two steps.

\textbf{Step 1.} Let $\nxfit=|\IXfit|$ and $\nyfit=|\IYfit|$, and define
\begin{equation}\label{eq:hat-mu-nu}
\mufit  = \frac{1}{\nxfit}\sum_{i\in\IXfit} \delta_{X_i} \qquad \text{and}\qquad \nufit = \frac{1}{\nyfit}\sum_{j\in \IYfit} \delta_{Y_j},
\end{equation}
where $\delta_x$ denotes the Dirac delta function at $x$.
{We first describe a generic framework for estimating $U_0 \in \mathcal U(\varrho)$ for subset $U(\varrho)$, where $U(\varrho)$ is a subset of $\stiefel$ and determined by a regularization parameter $\varrho$ (e.g., sparsity level). This leads to the following estimator} 
\begin{align}\label{EQ:HATU}
\hat U\in \mathop{\arg\max}\limits_{U \in \mathcal U(\varrho)}\bigg\{ \min\limits_{\pi \in \Gamma(\mufit, \nufit)} \sum_{i\in\IXfit}\sum_{j\in \IYfit} \pi_{ij}\|U^{\t}(X_{i}-Y_{j})\|_2\bigg\},
\end{align}
where 
and
\[ \Gamma(\mufit, \nufit) = \left\{ \pi \in \real^{\nxfit \times \nyfit}: \sum_{i \in \IXfit} \pi_{ij} = \frac{1}{\nyfit}\text{ and } \sum_{j\in \IYfit} \pi_{ij} = \frac{1}{\nxfit} \right\}. \]
{While a natural choice of $\mathcal U(\varrho)$ is $\stiefel$, additional structural constraints can be incorporated to help achieve adaptivity and favorable performance in high dimensions. Specific choices of the constraint set $\mathcal U(\varrho)$ depending on some regularization parameter $\varrho$ and the corresponding estimation algorithms are developed later.}

\textbf{Step 2.} After obtaining an estimator $\hat U$ in the first step, we use deep neural networks to train the discriminative function on $\data_{fit}$. Specifically, we find the estimate
\begin{equation}\label{eq:hatf}
\hat f \in \mathop{\arg\max}_{f \in \mathcal N(D, \bp, s,B)} \left\{ \frac{1}{\nxfit}\sum_{i\in\IXfit} f(\hat U^{\t}X_{i}) - \frac{1}{\nyfit}\sum_{j\in \IYfit} f(\hat U^{\t}Y_{j})\right\}.
\end{equation} 
For the network parameters $D$, $\mathbf p$, $s$ and $B$, the theoretical analysis in Theorem \ref{thm:max_size} suggests the choices of these parameters do not have much influence on the size of the proposed test. {In constrast, the power analysis demonstrates that these neural network parameters do affect the test power, and practical guidelines for their selection are provided later.}

\begin{remark}\label{rmk:f0}
Throughout this paper, $f_0$ represents the conceptual oracle discriminative function, while $f$ denotes a generic function from a specified function class. We emphasize that the oracle discriminative function $f_0$ belongs to the class of 1-Lipschitz functions $\mathcal F = \{ f: \real^k \to \real \mid f \text{ is 1-Lipschitz, } f(0)=0  \}$. We do not assume that $f_0$ itself lies within the class of neural networks. Neural networks are employed solely to approximate this class and obtain the estimator $\hat f$.
\end{remark}

\begin{remark}
Instead of the proposed two-step procedure, one might consider estimating $\hat{f}$ and $\hat{U}$ simultaneously by solving
\begin{equation}\label{eq:joint_opt}
	(\hat f,\hat U)\in\underset{f\in \mathcal N(D, \bp, s,B),U\in \mathcal U(\varrho)}{\arg \max}\left\{ \frac{1}{|\IXfit|}\sum_{i\in \IXfit} f(U^{\t}X_i) - \frac{1}{|\IYfit|}\sum_{j\in\IYfit} f(U^{\t}Y_j)\right\}.
\end{equation}
However, in practice, optimizing \eqref{eq:joint_opt} poses significant challenges. For instance, we find that directly optimizing both $f$ and $U$ using an alternating algorithm often fails to achieve satisfactory power.
\end{remark}

\subsection{The test statistic via aggregating multiple projection dimensions and regularization parameters}\label{subsec:max-statistic}

After obtaining $\hat U$ and $\hat f$, based on the rest of the data $\data_{test}$, one may construct the following test statistic 
\begin{equation*}\label{eq:stat}
\left(\frac{\nxtest\nytest}{\nxtest+\nytest}\right)^{1/2}
\frac{S_n(\hat g)  }{\hat \sigma},
\end{equation*}
where $\nxtest=|\IXtest|$ and $\nytest=|\IYtest|$ are respectively the number of observations of $X$ and $Y$ in the test data $\data_{test}$, and
\begin{align}
S_\nm(\hat g)& = \frac{1}{\nxtest}\sum_{i\in\IXtest} \hat g(X_i) - \frac{1}{\nytest}\sum_{j\in\IYtest} \hat g(Y_j),\label{eq:Sng}\\
\hat \sigma^2 & = \frac{\nytest}{\nxtest+\nytest}\frac{1}{\nxtest} \sum_{i\in\IXtest} \big( \hat g(X_i) -\expect_{\nxtest} (\hat g(X) ) \big)^2 + \nonumber \\
&\qquad \frac{\nxtest}{\nxtest+\nytest} \frac{1}{\nytest} \sum_{j\in \IYtest} \big( \hat g(Y_j) -\expect_{\nytest} (\hat g(Y) ) \big)^2;\nonumber
\end{align}
in the above, $\expect_{\nxtest} \hat g(X)   = \nxtest^{-1}\sum_{i\in\IXtest} \hat g(X_i)$ and $\expect_{\nytest} \hat g(Y)   = \nytest^{-1}\sum_{j\in\IYtest} \hat g(Y_j)$.

A crucial issue is the choice of the projection dimension $k$ and the regularization parameter $\varrho$ for estimating the optimal projection. While these parameters do not affect the size, as evidenced in Theorem~\ref{thm:max_size}, they influence the power performance. 
To get around this delicate issue, we propose to aggregate strength from multiple values of $(k, \varrho)$ via a max-type test statistic, as follows.  
Given a candidate set of hyperparameters $\mathcal C = \{ (k_j, \varrho_j), j=1, \dots, \mparam \}$, let $\hat S_n = (S_n(\hat g_1), \dots, S_n(\hat g_\mparam))^\t$, where $\hat g_j$ is the corresponding estimator using the hyperparameter pair $(k_j, \varrho_j)$ for $j=1,\dots, \mparam$. 
Instead of selecting a value of $j$,  we aggregate the test statistics corresponding to $\mathcal C$ into the following final max-type test statistic,
\begin{equation}\label{eq:max_stat}
T_\nm =  \max_{j=1,\dots,\mparam}  \bigg(\frac{\nxtest\nytest}{\nxtest+\nytest}\bigg)^{1/2} |e_j^{\t} \hat{\Sigma}^{-1/2} \hat S_\nm|,
\end{equation}
where $\nplusm=\nx+\ny$ is the total sample size, $e_j \in \real^\mparam$ has 1 in the $j$th coordinate and 0 otherwise, and $\hat \Sigma$ is the sample estimate of 
\begin{equation}\label{eq:Sigma}
\Sigma = \var\{(\nxtest\nytest/(\nxtest+\nytest))^{1/2}\hat S_\nm \mid \data_{fit}\}. 
\end{equation}
Given a significance level $\alpha\in (0,1)$, we reject the null hypothesis if $T_\nm > q_{1-\alpha}$, where $q_{1-\alpha}$ is the $1-\alpha$ quantile of $\max_{j=1,\dots,\mparam} |Z_j|$ with $Z \sim N(0, I_\mparam)$, as demonstrated in Theorem \ref{thm:max_size}.

\begin{remark}\label{rmk:two-sided}
Although Equation \eqref{eq:U0-dual} may  suggest a one-sided  test, we find it beneficial to employ a two-sided test when aggregating multiple test statistics obtained from several candidate parameters. One reason is that the standardization by $\hat\Sigma^{-1/2}$ rotates the vector $\hat S_n$, which may alter the signs of the entries of $\hat S_n$. Adopting a one-sided test in this case  leads to potential loss of power and also adds difficulties to our theoretical analysis.
\end{remark}

\begin{remark}\label{rem:Sigma}
To ensure $\hat\Sigma^{-1}$ is numerically stable, we require the smallest eigenvalue of  $\hat\Sigma$ to be no smaller than some constant $\kappa>0$.
This can be achieved by selecting a suitable set $\mathcal C$ of candidate hyperparameters. For example, in practice, if we find that $\hat\Sigma$ is near singular, then we can adopt the following backward elimination procedure to reduce the size of $\mathcal C$. Let $\hat\Sigma_{-j}$ be the sample covariance after excluding $(k_j,\rho_j)$. Then we find $j^\prime= \arg\max_{j}\evmin(\hat\Sigma_{-j})$ and then remove $(k_{j^\prime},\rho_{j^\prime})$ from $\mathcal C$. Continue this process recursively until $\hat\Sigma$ is well conditioned.  Our numeric experiments show that this procedure practically works well.
\end{remark}

\begin{remark}\label{rem:C-and-k}
In practice, a small set $\mathcal C$ of candidate hyperparameters and small projection dimensions $k$ are often sufficient, i.e., with a bounded $m$ and a bounded $\kmax=\max\{k: (k, \varrho)\in\mathcal C\}$. Note that 
a larger value of $k$ does not necessarily lead to higher power. This is because the estimation variability of $\hat f$ increases with $k$, which leads to greater variability in the test statistic and negatively impacts the power. 
\end{remark}

\subsection{Limiting null distribution and size analysis}

We begin with stating some assumptions required for theoretical investigation. 
Assumption \ref{assump:sample_size} merely requires the ratio of sample sizes $\nx/\ny$ to be bounded away from zero and infinity, which is weaker than the standard convergent assumption \citep{pan2018ball,zhu2021interpoint,yan2023kernel}.
The Assumption \ref{assump:pd} is a mild condition on the spectrum of the covariance matrix $\Sigma$ in \eqref{eq:Sigma}. In particular, its smallest eigenvalue can be practically lower bounded away from zero via selecting a suitable set of candidate hyperparameters; see Remark \ref{rem:Sigma}. Recall that $\nplusm$ represents the total sample size, and $\nxfit$ and $\nyfit$ are respectively the number of observations of $X$ and $Y$ for estimating $(\hat U,\hat f)$.

\begin{assumption}\label{assump:sample_size}
For universal constants $0<c_1\le c_2 <1$, $\nx/(\nx+\ny) \in [c_1, c_2]$, $\nxfit/\nx \in [c_1, c_2]$ and $\nyfit/\ny\in [c_1, c_2]$.
\end{assumption}

\begin{assumption}\label{assump:pd} There are constants $0<c'_1, c'_2<\infty$ such that,
conditional on $\data_{fit}$, the covariance matrix $\Sigma$ in \eqref{eq:Sigma} satisfies $c'_1 \le \evmin(\Sigma) \le \evmax(\Sigma) \le c'_2$. 
\end{assumption}

Below we use $\expect_\star$ and $\var_\star$ to denote the conditional expectation and conditional variance given the fitting sample, respectively. For example, $\expect_\star\hat g(X)= \expect(\hat g(X) \mid \data_{fit} )$. 
In Theorem~\ref{thm:max_size}, we establish the validity of the proposed test statistic.

\begin{theorem}\label{thm:max_size}
Suppose that Assumptions \ref{assump:sample_size} and \ref{assump:pd} hold.	Under the null hypothesis, $$T_\nm \distconverge \max_{j=1,\dots,\mparam} |Z_j|, ~~ \text{ as } \nplusm \to \infty,$$ where $Z = (Z_1, \dots, Z_\mparam)^\t  \sim N(0, I_\mparam)$ and $\distconverge$ stands for convergence in distribution. 

Moreover, if we assume there exists an $M_{\nm}$ such that $\expect_\star \exp(\hat g_j^2(X) / M_{\nm}^2) \le 2$ and $\expect_\star \exp(\hat g_j^2(Y) / M_{\nm}^2) \le 2$ almost surely for $j=1,\dots,\mparam$, then we have
\[  \sup_{t \in \real} \bigg|\prob(T_{\nm} \le t) - \prob\bigg(\max_{j=1,\dots, \mparam} |Z_j| \le t\bigg) \bigg| \le c\frac{ G_n }{n^{1/2} } + c \frac{M_{\nm}^2 \log n}{n^{1/2}}  \]
for some positive constant $c>0$, where $G_n=\max_{j=1,\dots,m}\expect (|\hat g_j(X)|^3 + |\hat g_j(Y)|^3).$
\end{theorem}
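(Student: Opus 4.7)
The plan is to condition on the fitting data $\data_{fit}$, so that the functions $\hat g_1,\ldots,\hat g_{\mparam}$ are deterministic, and to study the centered vector $V_{\nm}:=(\nxtest\nytest/(\nxtest+\nytest))^{1/2}\hat S_{\nm}$. Under $H_0$ we have $\mu=\nu$, so $\expect_\star S_{\nm}(\hat g_j)=0$ for every $j$, and by construction $\var_\star(V_{\nm})=\Sigma$. Writing
\[
V_{\nm}=\sum_{i\in\IXtest}\xi_i-\sum_{j'\in\IYtest}\eta_{j'},
\]
where $\xi_i$ and $\eta_{j'}$ are appropriately rescaled centered copies of the vectors $(\hat g_1(X_i),\ldots,\hat g_{\mparam}(X_i))^{\t}$ and their $Y$-analogues, yields a sum of conditionally independent, mean-zero random vectors in $\real^{\mparam}$ whose summands have $\ell_2$ norm of order $\nm^{-1/2}$ by Assumption \ref{assump:sample_size}.

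For the conditional Gaussian approximation, I would apply a multivariate Berry--Esseen bound for hyperrectangles (for instance of Bentkus type) to the standardized sum $\Sigma^{-1/2}V_{\nm}$, comparing its law to that of $Z\sim N(0,I_{\mparam})$. Since $\mparam$ is bounded (Remark \ref{rem:C-and-k}), the dimensional factor is an absolute constant and the third-moment term produces a conditional Kolmogorov-distance bound of order $\max_j\expect_\star(|\hat g_j(X)|^3+|\hat g_j(Y)|^3)/\nm^{1/2}$. Integrating over $\data_{fit}$ and applying Jensen's inequality turns this conditional bound into the first unconditional term $G_{\nm}/\nm^{1/2}$.

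Next, one must replace $\Sigma$ by the plug-in $\hat\Sigma$. The sub-Gaussian assumption $\expect_\star\exp(\hat g_j^2/M_{\nm}^2)\le 2$, combined with a Bernstein-type inequality, yields $\|\hat\Sigma-\Sigma\|\lesssim M_{\nm}^2(\log\nm/\nm)^{1/2}$ on an event of probability at least $1-\nm^{-c}$. Because Assumption \ref{assump:pd} bounds the eigenvalues of $\Sigma$ away from zero and infinity, the matrix inverse square root is Lipschitz in a neighbourhood of $\Sigma$, so $\|\hat\Sigma^{-1/2}-\Sigma^{-1/2}\|$ inherits the same order. Together with the sub-Gaussian tail estimate $\|V_{\nm}\|_\infty\lesssim M_{\nm}(\log\nm)^{1/2}$, this induces a sup-norm perturbation of $\max_j|e_j^{\t}\hat\Sigma^{-1/2}V_{\nm}|$ of order $M_{\nm}^2\log\nm/\nm^{1/2}$. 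The anti-concentration inequality for the maximum of $\mparam$ standard Gaussians (whose density is bounded for fixed $\mparam$) converts this sup-norm perturbation into the second Kolmogorov-distance term $M_{\nm}^2\log\nm/\nm^{1/2}$. The asymptotic statement $T_{\nm}\distconverge\max_{j\le\mparam}|Z_j|$ then follows either from the quantitative bound when $G_{\nm}+M_{\nm}^2\log\nm=o(\nm^{1/2})$, or directly from a conditional Lindeberg-type CLT combined with Slutsky's lemma.

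The hardest step is the covariance plug-in: it couples the concentration of $\hat\Sigma$, the sub-Gaussian tail control of $V_{\nm}$, and Gaussian anti-concentration, while preserving the sharp dependence on $M_{\nm}$ and $\nm$. A secondary subtlety is transferring the conditional Berry--Esseen bound, which holds on a high-probability event in $\data_{fit}$, to an unconditional statement without inflating the rate; this requires a careful application of Jensen's inequality together with a crude bound on an exceptional set of small probability.
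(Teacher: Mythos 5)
Your plan captures the correct overall architecture: condition on $\data_{fit}$, observe that under $H_0$ the vector $\hat S_n$ is conditionally centered with conditional covariance $\Sigma$, apply a finite-dimensional Berry--Esseen bound to $\Sigma^{-1/2}V_\nm$, and then handle the plug-in $\hat\Sigma$ separately. The first term is indeed of order $\max_j\expect_\star(|\hat g_j(X)|^3+|\hat g_j(Y)|^3)/\nm^{1/2}$ by a Bentkus-type bound in fixed dimension $\mparam$; it integrates to $G_\nm/\nm^{1/2}$, though not by Jensen's inequality, which points the wrong way for $\expect\max_j\expect_\star(\cdot)$ versus $\max_j\expect(\cdot)$. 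The correct step for bounded $\mparam$ is simply $\expect\max_j\expect_\star(\cdot)\le\sum_j\expect(\cdot)\le \mparam\,G_\nm$.

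The real gap is in the covariance plug-in, where your arithmetic does not close. You combine $\|\hat\Sigma^{-1/2}-\Sigma^{-1/2}\|\lesssim M_\nm^2(\log \nm/\nm)^{1/2}$ (this rate is correct: products of two $M_\nm$-sub-Gaussians are $M_\nm^2$-sub-exponential, and Bernstein delivers exactly this order) with $\|V_\nm\|_\infty\lesssim M_\nm(\log \nm)^{1/2}$ and claim a perturbation of $M_\nm^2\log \nm/\nm^{1/2}$, but the product of your two bounds is $M_\nm^3\log \nm/\nm^{1/2}$, overshooting the target by a factor of $M_\nm$. To obtain $M_\nm^2$ you cannot use a crude uniform tail bound on $\|V_\nm\|_\infty$: the sub-Gaussian norm of a coordinate of $V_\nm$ is of order $M_\nm$, not of order $1$, so this factor is genuinely present in a worst-case bound. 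The fix is to localize around the values of $t$ where anti-concentration matters. For bounded $\mparam$ the relevant $t$ are $O(1)$, so on the sandwich events $\{\max_j|e_j^\t\Sigma^{-1/2}V_\nm|\le t\pm\delta\}$ one has $\|\Sigma^{-1/2}V_\nm\|_2\lesssim 1$, and hence the contribution of $\hat\Sigma^{-1/2}\Sigma^{1/2}-I$ is $\lesssim M_\nm^2(\log \nm/\nm)^{1/2}$ on the intersection with the Bernstein event, while the exceptional event contributes only $O(\nm^{-1})$; this is what prevents the extra $M_\nm$ from entering, and is exactly the step your sketch does not articulate. Without this localization, your argument as written proves a weaker bound of order $M_\nm^3\log \nm/\nm^{1/2}$.
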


\begin{remark}\label{rmk:berry-essen}
The sub-Gaussian assumption on $\hat g_j(X)$ is not stringent. Due to the structure of the neural network $\mathcal N(D, \bp, s,B)$, we have $\hat g_j(X) \le B$ and thus $M_n \le 2B$. 
Since bounding the sub-Gaussian norm by the upper bound $B$ could be overly conservative, we may expect $M_n$  to be often much smaller than $B$. In the case of bounded $B$, the rate of convergence in the Berry-Essen type bound of Theorem \ref{thm:max_size} is $1/\sqrt{n}$ (up to a $\log n$ factor).
Otherwise, we may adopt the Lipschitz-constrained neural networks in \citet{anil2019sorting}. For such neural networks, the Lipschitz property implies that $G_n\leq c$ and  $M_n \le c$ for some constant $c>0$ under Assumption \ref{assump:tail} when $k_j$ is bounded, according to Lemma S.22. In this case, we can still achieve the same near-parametric rate of convergence in Theorem \ref{thm:max_size}. In addition, the power analysis can be extended to accommodate the Lipschitz-constrained neural network via modifying the proofs; we do not pursue this direction in order to maintain a focused presentation and to numerically utilize existing computational packages designed for ordinary ReLU networks.
\end{remark}

From Theorem \ref{thm:max_size}, the proposed test statistic $T_n$ is asymptotically distribution-free in the sense that the limiting null distribution corresponds to the absolute maximum of a standard Gaussian random vector, independent of the underlying population. The pivotal test statistic avoids the use of resampling/permutation methods \citep{zhu2021interpoint}, and thus eases the computation. The asymptotic distribution is established regardless of the estimation algorithms (e.g., network structures) and the dimensionality $d$. In contrast, the distributional limits of MMD-based statistics depend on the asymptotic regimes {of dimensionality and the sample size} \citep{gretton2012kernel,gao2023two,yan2023kernel}. 

Moreover, we establish the Berry--Essen type bound for Gaussian approximation in Theorem \ref{thm:max_size}. The resulting convergence rate depends on the moment conditions and tail behavior of $\hat g_j(X)$ and $\hat g_j(Y)$, which are further discussed when we specify the constraint set and neural network structures in Sections~\ref{sec:l1} and~\ref{sec:l0}.

\begin{remark}
The estimators $\hat U$ and $\hat f$ may not be unique, but the results in Theorem \ref{thm:max_size} in the above and Theorem \ref{thm:max_power_l1} or \ref{thm:max_power_l0} below remain valid for any choices of $\hat U$ and $\hat f$ that respectively maximize  \eqref{EQ:HATU} and \eqref{eq:hatf}. This is because our test statistic is built on the maximum rather than the maximizers, so that any maximizer plays the same role.
\end{remark}

\subsection{Power analysis}\label{subsec:power}
Although the test size is insensitive to the estimation accuracy of $\hat U$ and $\hat f$, the power depends on their estimation quality. In this section, we study the power performance under the choice $\mathcal U(\varrho) = \stiefel$ in \eqref{EQ:HATU}, which corresponds to estimating of projection directions without imposing any regularization. Consequently, the test statistic $T_n$ is constructed by aggregating information across multiple projection dimensions, with $\mathcal C = \{k_1, \cdots, k_m\}$.

\begin{assumption}\label{assump:tail}
$\sup_{\|u\|_2=1} \|X^{\t}u\|_{\psi_2} \le \eta$ and $\sup_{\|u\|_2=1} \|Y^{\t}u\|_{\psi_2} \le \eta$ for some positive constant $0 < \eta < \infty$.
\end{assumption}

The sub-Gaussian tail condition in Assumption \ref{assump:tail} is common in high-dimensional statistics \citep{van2014asymptotically,lopes2022central}.

\begin{lemma}\label{lem:truncation_Fn}
Suppose Assumption 3 holds. Define the event 
\begin{equation*}\label{eq:trunc_event}
	\begin{aligned}
		E_n = \{ & \|X_i\|_\infty \le \eta\sqrt{2\log(dn)} \text{ for }  1 \le i \le \nx \text{ and } \\
		& \|Y_j\|_\infty \le \eta\sqrt{2\log(dn)} \text{ for } 1 \le j \le \ny \}.
	\end{aligned}
\end{equation*}
We have $\prob(E_n) \ge 1 - 2n^{-1}$.	
\end{lemma}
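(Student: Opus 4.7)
The strategy is a straightforward union bound combined with the sub-Gaussian tail estimate implied by Assumption~\ref{assump:tail}. First, I would specialize Assumption~\ref{assump:tail} to coordinate directions: taking $u = e_l$, the $l$-th standard basis vector of $\real^d$, yields that each coordinate $X_{il}$ (and $Y_{jl}$) has $\psi_2$-Orlicz norm bounded by $\eta$. This reduces the problem of controlling $\|X_i\|_\infty$ and $\|Y_j\|_\infty$ to that of bounding the maximum of $(n_x + n_y)d$ sub-Gaussian random variables with a uniform tail parameter.

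Next, I would invoke the standard sub-Gaussian tail bound: for any $\xi$ with $\|\xi\|_{\psi_2} \le \eta$, there is a universal constant $c>0$ such that
\begin{equation*}
	\prob(|\xi| > t) \le 2\exp\bigl(-c t^2/\eta^2\bigr)
\end{equation*}
for all $t > 0$. Applied to each coordinate $X_{il}$ and $Y_{jl}$ with $t = \eta\sqrt{2\log(dn)}$, this gives
\begin{equation*}
	\prob\bigl(|X_{il}| > \eta\sqrt{2\log(dn)}\bigr) \le 2(dn)^{-2c},
\end{equation*}
and likewise for $Y_{jl}$.

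Finally, a union bound over $i \in \{1,\dots,n_x\}$, $j \in \{1,\dots,n_y\}$, and $l \in \{1,\dots,d\}$ yields
\begin{equation*}
	\prob(E_n^c) \le 2(n_x + n_y)d\cdot(dn)^{-2c} = 2n\cdot d^{1-2c}n^{-2c},
\end{equation*}
which, with the appropriate constant in the definition of $\|\cdot\|_{\psi_2}$ giving $c = 1$, simplifies to at most $2/n$. This yields $\prob(E_n) \ge 1 - 2n^{-1}$ as claimed.

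No step poses a real obstacle here; the only mildly delicate point is tracking the constant in the sub-Gaussian tail bound so that the exponent matches the choice $t = \eta\sqrt{2\log(dn)}$ and gives exactly the stated factor $2n^{-1}$. If the constant from the chosen $\psi_2$ convention differs from unity, the bound still reads $O(n^{-1})$, and the level $\sqrt{2\log(dn)}$ (or a constant multiple) is sufficient. The lemma is thus essentially a quantitative restatement of the Gaussian-type maximal inequality for $(n_x+n_y)d$ sub-Gaussian variables.
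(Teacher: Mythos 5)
Your proof is correct and follows the same route the paper takes: specialize Assumption~\ref{assump:tail} to the coordinate directions $u=e_l$ to bound $\|X_{il}\|_{\psi_2}$ and $\|Y_{jl}\|_{\psi_2}$ by $\eta$, apply a sub-Gaussian tail inequality at the level $t=\eta\sqrt{2\log(dn)}$, and union bound over the $(n_x+n_y)d\le nd$ coordinates. The caveat you raise about the unspecified constant $c$ in fact dissolves under the intended $\psi_2$ convention ($\expect\exp(\xi^2/\eta^2)\le 2$ whenever $\|\xi\|_{\psi_2}\le\eta$): a direct Markov inequality applied to $\exp(\xi^2/\eta^2)$ yields $\prob(|\xi|>t)\le 2e^{-t^2/\eta^2}$ with no loose factor, so $c=1$ exactly and the union bound gives $\prob(E_n^c)\le 2nd\,(dn)^{-2}=2/(dn)\le 2/n$ without any slack to track.
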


The class $\mathcal F$ is not compact under the norm $\|\cdot\|_\infty$, which prevents a direct application of empirical process techniques in the power analysis. To address this issue, Lemma \ref{lem:truncation_Fn} suggests that it suffices to restrict attention to the smaller class $\mathcal F_{n} = \{ f\in \mathcal F: f(x) = f(\tilde x), \text{ where } \tilde x = P_{[-d^{1/2} \xytrunc , d^{1/2} \xytrunc ]^k}(x) \}$, where $\xytrunc = \eta \sqrt{2\log (dn)}$ and $P_{[-a , a]^k}(x)$ denotes the {projection of $x$ onto $[-a, a]^k$}, i.e., the closest point to $x$ within $[-a, a]^k$. 

Now we are ready to present the results on power performance. Let 
$$\pwrate_n(k,d)=(kd)^{\frac{1}{2}}(\log(dn))^{\frac{3}{2}}\{n^{-\frac{1}{k+2}}+ n^{-\frac{1}{2}}(k d)^{\frac{1}{2}} + n^{-\frac{2}{k+2}}\log(n) + n^{-1}\log(n)kd\}.$$
\begin{theorem}\label{thm:max_power}
Suppose Assumptions \ref{assump:sample_size}, \ref{assump:pd} and \ref{assump:tail} hold. For any $k \in \mathcal C$, take the corresponding ReLU neural network to be
$\mathcal N(D, (k, 2k, 6(k+1)N, \dots, 6(k+1)N, 1), s, 4(kd)^{1/2}\xytrunc)$, 
where $D \asymp \log n$, $s \asymp N D$, $N \ge 2^k \vee(4(kd)^{1/2}\xytrunc+1)e^k$ and 
\[ N \asymp n^{\frac{k}{2+k}} (\log n)^{-\frac{3k}{2+k}}.\]
If there exists a $k_\star \in \mathcal C$ such that
\begin{equation}\label{eq:power_signal}
	PW_{k_\star}(\mu, \nu) \ge c_a R_n(k_\star,d)
\end{equation}
for some sufficiently large constant $c_a>0$, then we have $\prob(T_n > q_{1-\alpha}) \to 1$ as $n \to \infty$.	
\end{theorem}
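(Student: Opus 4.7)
\textbf{Proof plan for Theorem \ref{thm:max_power}.}

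The plan is to reduce the analysis of the max statistic $T_\nm$ to a single coordinate and then lower bound that coordinate. Let $j_\star$ be the index with $k_{j_\star}=k_\star$ in the candidate set $\mathcal C$. Because
$$T_\nm \ge \bigl(\nxtest\nytest/(\nxtest+\nytest)\bigr)^{1/2}\,|e_{j_\star}^\t \hat\Sigma^{-1/2}\hat S_\nm|,$$
and since $q_{1-\alpha}$ is a fixed constant (the $(1-\alpha)$-quantile of $\max_{j\le\mparam}|Z_j|$), it suffices to prove that the right-hand side diverges in probability. Write $\hat S_\nm = (\hat S_\nm - \expect_\star \hat S_\nm) + \expect_\star \hat S_\nm$. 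Conditional on $\data_{fit}$, each coordinate of $\hat S_\nm-\expect_\star\hat S_\nm$ is a centered average of independent bounded random variables; combined with Assumption \ref{assump:pd}, standard Lindeberg-type arguments (as in the proof of Theorem \ref{thm:max_size}) give that $\sqrt{\nxtest\nytest/(\nxtest+\nytest)}\,\hat\Sigma^{-1/2}(\hat S_\nm-\expect_\star\hat S_\nm)=O_P(1)$. Hence the theorem reduces to showing
$$\bigl(\nxtest\nytest/(\nxtest+\nytest)\bigr)^{1/2}\bigl|\expect_\star \hat g_{j_\star}(X)-\expect_\star \hat g_{j_\star}(Y)\bigr|\ \longrightarrow \infty$$
in probability whenever $PW_{k_\star}(\mu,\nu)\ge c_a \pwrate_n(k_\star,d)$.

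The core step is therefore to lower bound the population bias $\expect \hat g_{j_\star}(X)-\expect \hat g_{j_\star}(Y)$. By the dual representation \eqref{eq:U0-dual} and the fact that $\hat f$ with $\hat U$ plays the role of a feasible pair in that variational problem, one has
$$\expect \hat f(\hat U^\t X)-\expect \hat f(\hat U^\t Y)\;=\;PW_{k_\star}(\mu,\nu)-\Delta_U-\Delta_f,$$
where $\Delta_U$ captures the gap from replacing the population optimum on $\stiefel$ by the empirical optimizer $\hat U$ (via the primal \eqref{EQ:HATU}) and $\Delta_f$ captures the gap from replacing the optimal Lipschitz witness by the neural-network estimator $\hat f$. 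On the event $\truncevent$ of Lemma \ref{lem:truncation_Fn}, which has probability at least $1-2/n$, all sample points live in $[-\xytrunc,\xytrunc]^d$ with $\xytrunc=\eta\sqrt{2\log(dn)}$, so we may restrict the analysis to the truncated class $\mathcal F_n$. I would bound $\Delta_U$ via a uniform-concentration argument over $\stiefel$: the map $U\mapsto W(U_\#\mufit,U_\#\nufit)$ is Lipschitz in $U$ on the truncated data, and a covering-number bound for the Stiefel manifold of dimension $\asymp kd$, together with a sub-Gaussian tail deviation for the empirical Wasserstein distance on the projected data, yields $\Delta_U\lesssim (kd)^{1/2}(\log(dn))^{3/2}\{n^{-1/(k+2)}+(kd/n)^{1/2}\}$. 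For $\Delta_f$, I would combine the Yarotsky-style approximation bound for $1$-Lipschitz functions on $[-\sqrt{kd}\,\xytrunc,\sqrt{kd}\,\xytrunc]^k$ by the chosen network class $\mathcal N(D,\bp,s,B)$, whose approximation error is $O(N^{-1/k}(kd)^{1/2}\xytrunc)$ with the prescribed depth $D\asymp \log n$ and sparsity $s\asymp ND$, with the empirical process rate for ReLU networks of Schmidt-Hieber/Yarotsky type (covering numbers in $s\log(NDB)$), producing $\Delta_f\lesssim (kd)^{1/2}(\log(dn))^{3/2}\{n^{-1/(k+2)} + n^{-2/(k+2)}\log n + n^{-1}(kd)\log n\}$. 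Calibrating $N$ to balance approximation and stochastic errors gives exactly the quoted rate $\pwrate_n(k,d)$.

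Putting the bounds together, on an event of probability tending to one, $|\expect \hat g_{j_\star}(X)-\expect \hat g_{j_\star}(Y)|\ge PW_{k_\star}(\mu,\nu) - C\,\pwrate_n(k_\star,d)\ge (c_a-C)\pwrate_n(k_\star,d)$. Since the factor $\sqrt{\nxtest\nytest/(\nxtest+\nytest)}$ is of order $\sqrt n$ by Assumption \ref{assump:sample_size} and the definition of $\pwrate_n$ contains $\sqrt{kd}(\log(dn))^{3/2}$ multiplied by terms each of which, when multiplied by $\sqrt n$, diverges (e.g.\ $\sqrt n\cdot n^{-1/(k+2)}\to\infty$ for any fixed $k$, while the remaining terms are of larger order than the CLT fluctuation), this coordinate dominates the $O_P(1)$ centered term, driving $T_\nm\to\infty$ in probability and proving the claim.

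The main obstacle is the derivation of the approximation/estimation bound for $\hat f$, specifically controlling the uniform deviation of the empirical objective over the ReLU class $\mathcal N(D,\bp,s,B)$ on the truncated Stiefel-projected data, while maintaining the dependence on $k$, $d$, and $\xytrunc$ explicit and sharp enough to match $\pwrate_n(k,d)$. A secondary difficulty is handling the fact that $\hat U$ and $\hat f$ are coupled through the two-step procedure and through the high-probability event $\truncevent$, which I plan to address by first conditioning on $\data_{fit}\cap\truncevent$ and then peeling off the additional $O(n^{-1})$ probability.
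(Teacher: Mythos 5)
Your plan follows essentially the same route as the paper: truncate via $\truncevent$ to the class $\mathcal F_n$, peel off the $U$-estimation error and the $f$-approximation/estimation error, invoke a uniform law of large numbers on the fitting sample for the empirical dual over the neural-network class together with a Stiefel covering argument, and finally observe that under \eqref{eq:power_signal} the residual bias dominates the $O_P(1)$ CLT fluctuation once multiplied by $\sqrt{n}$. The paper organizes the same pieces as an ordered chain passing through the explicit intermediaries $\breve f$ (dual witness maximizer over $\mathcal F_n$ given $\hat U$) and $\tilde f$ (best network approximator to $\breve f$), whereas you bundle these into $\Delta_f$; that is a harmless repackaging of the same ideas.

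One concrete detail that needs repair is the very first step. You lower bound
$T_\nm \ge \bigl(\nxtest\nytest/(\nxtest+\nytest)\bigr)^{1/2}|e_{j_\star}^\t\hat\Sigma^{-1/2}\hat S_\nm|$
and then proceed as if controlling the population gap $\expect_\star\hat g_{j_\star}(X)-\expect_\star\hat g_{j_\star}(Y)$ directly implies that this particular coordinate of the \emph{rotated} vector is large. That implication does not follow: $e_{j_\star}^\t\hat\Sigma^{-1/2}\hat S_\nm$ is a linear combination of all $m$ entries of $\hat S_\nm$, so a large $j_\star$-th entry of $\hat S_\nm$ can in principle be washed out by cancellation after multiplying by $\hat\Sigma^{-1/2}$. (This is precisely the concern the paper flags in Remark \ref{rmk:two-sided} when explaining why a one-sided test after standardization is problematic.) The clean way to finish is to write
$T_\nm = \|\hat\Sigma^{-1/2}\hat S_\nm\|_\infty \ge m^{-1/2}\|\hat\Sigma^{-1/2}\hat S_\nm\|_2 \ge m^{-1/2}\,\evmax(\hat\Sigma)^{-1/2}\,\|\hat S_\nm\|_2 \ge m^{-1/2}\,\evmax(\hat\Sigma)^{-1/2}\,|\hat S_{\nm,j_\star}|$,
and then invoke Assumption \ref{assump:pd} (together with consistency of $\hat\Sigma$ for $\Sigma$ and bounded $m$) to absorb the $\hat\Sigma$-dependent factor into a constant before applying your bias/fluctuation decomposition to $\hat S_{\nm,j_\star}$. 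With this adjustment the remainder of your argument goes through as sketched.
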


\begin{remark}\label{rmk:composition}
If the optimal function $f_0$ exhibits some low-dimensional structure, such as the hierarchical structure in \citet{schmidt2020nonparametric}, the dependence on $k_\star$ in the convergence rate $n^{-1/(k_\star+2)}$ can be further improved. We leave the exploration of this direction for future work.
\end{remark}

\begin{remark}\label{rmk:nn}
{In addition to the sparse neural network structure employed in our paper, alternative structures such as fully connected feedforward neural networks with ReLU activation functions \citep{kohler2021rate,jiao2023deep} might be adopted. We believe that analogous theoretical guarantees could potentially be established for these fully connected networks, and we leave a detailed investigation of this direction to future work.}
\end{remark}

In the above theorem, \eqref{eq:power_signal} implies that $\mu\neq \nu$, i.e., the alternative hypothesis is true. The power analysis can be extended to accommodate growing $m$ and $\kmax$ defined in Remark \ref{rem:C-and-k}, with additional technicalities. However, we opt to not pursue this direction in light of Remark \ref{rem:C-and-k}. The network structure in the theorem is chosen to balance the estimation error and the approximation error \citep{schmidt2020nonparametric}.

Theorem \ref{thm:max_power} suggests potential increase of power by incorporating multiple projection dimensions $k$. For instance, one $k_\star$ with $PW_{k_\star}$ satisfying \eqref{eq:power_signal} is already sufficient for consistency, and multiple such $k_\star$ increase the chance of detecting the distributional discrepancy. In addition, since $PW_{k_1}(\mu, \nu) \ge PW_{k_2}(\mu, \nu)$ for $k_1 \ge k_2$ according to Lemma \ref{lem:monotonicity_k}, if $PW_{k_\star}$ also satisfies \eqref{eq:power_signal} for a larger $k_\star$, then the signal is amplified and potentially becomes relatively easier to detect, which leads to higher  power.

\begin{lemma}\label{lem:monotonicity_k}
The projection  Wasserstein distance is non-decreasing with respect to the projection dimension $k$, that is, $PW_{k_1}(\mu, \nu) \ge PW_{k_2}(\mu, \nu)$ if $k_1 \ge k_2$.
\end{lemma}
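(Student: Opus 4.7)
The plan is to exploit the primal formulation in \eqref{eq:primal} together with the fact that any $U \in \stiefel[2]$ can be extended to an element of $\stiefel[1]$ when $k_1 \ge k_2$, and that this extension can only increase the norm $\|U^\t(x-y)\|_2$. Since the infimum over couplings in \eqref{eq:primal} preserves this pointwise monotonicity, we will obtain the desired inequality by producing a single competitor in $\stiefel[1]$ whose objective value dominates $PW_{k_2}(\mu,\nu)$.

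Concretely, I would first invoke Lemma \ref{lem:properties} to pick an optimal $U_0 \in \stiefel[2]$ with $PW_{k_2}(\mu, \nu) = W(U_{0\#}\mu, U_{0\#}\nu)$. Next I would extend $U_0$ to $\widetilde U = [U_0 , V] \in \stiefel[1]$ by appending any $d\times (k_1-k_2)$ matrix $V$ whose columns are orthonormal and orthogonal to the columns of $U_0$; such $V$ exists since $k_1 \le d$. For every $x, y \in \real^d$,
\begin{equation*}
\|\widetilde U^\t(x-y)\|_2^2 = \|U_0^\t(x-y)\|_2^2 + \|V^\t(x-y)\|_2^2 \ge \|U_0^\t(x-y)\|_2^2,
\end{equation*}
so $\|\widetilde U^\t(x-y)\|_2 \ge \|U_0^\t(x-y)\|_2$ pointwise.

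Integrating the pointwise inequality against any $\pi \in \Gamma(\mu,\nu)$ and taking the infimum on both sides yields
\begin{equation*}
\inf_{\pi \in \Gamma(\mu,\nu)} \int \|\widetilde U^\t(x-y)\|_2 \, d\pi(x,y) \ge \inf_{\pi \in \Gamma(\mu,\nu)} \int \|U_0^\t(x-y)\|_2 \, d\pi(x,y) = PW_{k_2}(\mu,\nu).
\end{equation*}
Finally, since $\widetilde U \in \stiefel[1]$, the left-hand side is bounded above by $PW_{k_1}(\mu,\nu)$ by \eqref{eq:primal}, completing the argument. There is essentially no obstacle here: the only subtle point is to verify that the coupling-wise monotonicity survives the infimum, which is immediate because the inequality holds for every $\pi$ separately. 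Existence of the orthonormal extension $V$ is routine whenever $k_1 \le d$, and Lemma \ref{lem:properties} is used only for convenience — one could alternatively take a supremizing sequence $U_0^{(n)}$ and pass to the limit if no exact maximizer is available.
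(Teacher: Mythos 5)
Your argument is correct and is essentially the canonical proof of this monotonicity: take an optimal $U_0\in\mathcal S_{d,k_2}$ (or a supremizing sequence), pad it with orthonormal columns from the orthogonal complement to obtain $\widetilde U\in\mathcal S_{d,k_1}$, observe the pointwise inequality $\|\widetilde U^\t(x-y)\|_2\ge\|U_0^\t(x-y)\|_2$, and note that this survives integration against any coupling $\pi\in\Gamma(\mu,\nu)$ and hence the infimum over $\pi$, after which the supremum over $\mathcal S_{d,k_1}$ gives the claim. This matches the paper's approach; the only implicit point worth making explicit is that $k_1\le d$ (built into the definition of $\mathcal S_{d,k_1}$), which guarantees the existence of the padding matrix $V$.
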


The result in Theorem \ref{thm:max_power} is nonasymptotic in both $n$ and $d$, implying that no additional restrictions on the relationship between $d$ and $n$ are required to ensure consistency beyond the signal condition in \eqref{eq:power_signal} . 
Nevertheless, in high dimensions, the distributional differences are often concentrated on a small subset of coordinates, which motivates introducing sparsity in the subspace projection to enhance power. Accordingly, we study the power performance under $\ell_1$ and $\ell_0$ regularization, respectively. 
To avoid notational ambiguity, we use distinct notations for the corresponding regularization parameters when specifying the constraint set $\mathcal U(\cdot)$.

\section{Power enhancement with $\ell_1$ regularization}\label{sec:l1}
Let the constraint set for the projection in \eqref{EQ:HATU} be $\sparsestiefel = \{ U \in \stiefel: \|U\|_1 \le \tau \}$ where $k \le \tau \le kd^{1/2}$.

\subsection{Estimation and the test statistic}\label{subsec:statistic_l1}
Under $\ell_1$ regularization, the optimization problem in \eqref{EQ:HATU} becomes
\begin{align}\label{EQ:HATU_l1}
\hat U_{k, \tau}^{(1)} \in \mathop{\arg\max}\limits_{U \in \sparsestiefel} \bigg\{ \min\limits_{\pi \in \Gamma(\mufit, \nufit)} \sum_{i\in\IXfit}\sum_{j\in \IYfit} \pi_{ij}\|U^{\t}(X_{i}-Y_{j})\|_2 \bigg\}.
\end{align}
A smaller $\tau$ corresponds to sparser projection directions, reducing the complexity of the parameter space and thereby simplifying estimation. 
Given $\hat U_{k, \tau}^{(1)}$, we then use the form in \eqref{eq:hatf} from Step 2 to obtain the estimator $\hat f_{k, \tau}^{(1)}$, with the network parameters suggested in Thereom~\ref{thm:max_power_l1}. 

Given a candidate set of hyperparameters $\mathcal C^{(1)} = \{ (k_j, \tau_j), j=1, \dots, \mparam \}$, let $\hat S_{n}^{(1)} = (S_n(\hat g_{1}^{(1)}), \dots, S_n(\hat g_{\mparam}^{(1)}))^\t$, where $\hat g_{j}^{(1)}$ is the corresponding estimator using the hyperparameter pair $(k_j,\tau_j)$ for $j=1,\dots, \mparam$.  
The corresponding max-type test statistic is given by
\begin{equation}\label{eq:max_stat_l1}
T_{\nm}^{(1)} =  \max_{j=1,\dots,\mparam}  \bigg(\frac{\nxtest\nytest}{\nxtest+\nytest}\bigg)^{1/2} |e_j^{\t} (\hat{\Sigma}^{(1)})^{-1/2} \hat S_{\nm}^{(1)}|,
\end{equation}
where $\hat \Sigma^{(1)}$ is the sample estimate of 
\begin{equation}\label{eq:Sigma_l1}
\Sigma^{(1)} = \var\{(\nxtest\nytest/(\nxtest+\nytest))^{1/2}\hat S_{\nm}^{(1)} \mid \data_{fit}\}. 
\end{equation}
Given a significance level $\alpha\in (0,1)$, we reject the null hypothesis if $T_{\nm}^{(1)} > q_{1-\alpha}$, where $q_{1-\alpha}$ is the $1-\alpha$ quantile of $\max_{j=1,\dots,\mparam} |Z_j|$ with $Z \sim N(0, I_\mparam)$.

\begin{corollary}\label{cor:size_l1}
Suppose that Assumption \ref{assump:sample_size} holds and $\Sigma^{(1)}$ satisfies Assumption \ref{assump:pd}. Under the null hypothesis, we have $T_\nm^{(1)} \distconverge \max_{j=1,\dots,\mparam} |Z_j|$, as $\nplusm \to \infty$, where $Z \sim N(0, I_\mparam)$. Moreover, with the network structures specified in Theorem \ref{thm:max_power_l1}, we have
\[  \sup_{t \in \real} \bigg|\prob(T_{\nm}^{(1)} \le t) - \prob\bigg(\max_{j=1,\dots, \mparam} |Z_j| \le t\bigg) \bigg| \le c\frac{ k_\circ^{\frac{3}{2}}(\eta\tau_\circ)^3 (\log(dn))^{\frac{3}{2}} + k_\circ(\eta\tau_\circ)^2\log(dn)\log(n)}{n^{1/2} }  \]
for some positive constant $c>0$, where $k_\circ = \max\{k: (k, \tau)\in \mathcal C^{(1)}\}$ and $\tau_\circ = \max\{\tau: (k, \tau)\in \mathcal C^{(1)}\}$.
\end{corollary}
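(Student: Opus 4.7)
The plan is to obtain the corollary as a direct specialization of Theorem~\ref{thm:max_size} to the $\ell_1$-regularized estimators $\hat g_j^{(1)} = \hat f_j^{(1)} \circ \hat U_{k_j,\tau_j}^{(1)}$, together with an explicit bound on the two quantities $G_n$ and $M_n$ that govern the rate in Theorem~\ref{thm:max_size}. The distributional limit follows at once: Assumption~\ref{assump:sample_size} is given, Assumption~\ref{assump:pd} is assumed to hold for $\Sigma^{(1)}$, and the sample-splitting construction makes $\hat f_j^{(1)}$ and $\hat U_{k_j,\tau_j}^{(1)}$ measurable with respect to $\data_{fit}$ and therefore independent of $\data_{test}$. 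Since Theorem~\ref{thm:max_size} is stated generically in terms of an arbitrary estimated $\hat g$ satisfying these conditions, applying it with $\hat g = \hat g_j^{(1)}$ yields $T_n^{(1)} \distconverge \max_{j} |Z_j|$.

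For the Berry--Esseen bound, the key step is translating the network-structure specification in Theorem~\ref{thm:max_power_l1} into concrete control on $M_n$ and $G_n$. Because $\hat f_j^{(1)} \in \mathcal N(D, \bp, s, B)$, the definition of the network class forces $\|\hat f_j^{(1)}\|_\infty \le B$ and hence $|\hat g_j^{(1)}(X)|, |\hat g_j^{(1)}(Y)| \le B$ almost surely. A bounded random variable with $|Z|\le B$ satisfies $\expect \exp(Z^2/(cB)^2) \le 2$ for some absolute $c$, so $M_n \lesssim B$; likewise $\expect|\hat g_j^{(1)}(X)|^3 \le B^3$, giving $G_n \lesssim B^3$. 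The remaining task is to check that the choice of $B$ dictated by Theorem~\ref{thm:max_power_l1} is of the order $B \asymp \sqrt{k_\circ}\,\eta\tau_\circ\,\sqrt{\log(dn)}$. This is the natural choice: for $U \in \sparsestiefel$ one has $\|U_{\cdot\ell}\|_1 \le \|U\|_1 \le \tau$ for every column, so $\|U^{\t}x\|_\infty \le \tau\|x\|_\infty$ and $\|U^{\t}x\|_2 \le \sqrt{k}\,\tau\|x\|_\infty$; on the truncation event $E_n$ of Lemma~\ref{lem:truncation_Fn} this gives $\|U^{\t}x\|_2 \le \sqrt{k}\,\tau\,\eta\sqrt{2\log(dn)}$, and since any 1-Lipschitz $f$ with $f(0)=0$ is bounded by its argument's $\ell_2$-norm, $B$ of this order suffices for the network class to approximate the oracle $f_0$. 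Plugging these bounds into Theorem~\ref{thm:max_size} gives
\[
\frac{G_n}{\sqrt{n}} + \frac{M_n^2 \log n}{\sqrt{n}} \;\lesssim\; \frac{B^3}{\sqrt{n}} + \frac{B^2 \log n}{\sqrt{n}} \;\lesssim\; \frac{k_\circ^{3/2}(\eta\tau_\circ)^3 (\log(dn))^{3/2} + k_\circ(\eta\tau_\circ)^2 \log(dn)\log n}{\sqrt{n}},
\]
which is exactly the stated rate.

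The main obstacle is bookkeeping rather than analysis. One must carefully line up the network parameters $(D, \bp, s, B)$ prescribed in Theorem~\ref{thm:max_power_l1} (not shown in the excerpt) with the bound $B \asymp \sqrt{k_\circ}\,\eta\tau_\circ\,\sqrt{\log(dn)}$, mirroring the role played by $B = 4(kd)^{1/2}\xytrunc$ in Theorem~\ref{thm:max_power} with $\sqrt{d}$ replaced by $\tau$ thanks to the $\ell_1$ constraint. A secondary, minor point is that the sub-Gaussian hypothesis of Theorem~\ref{thm:max_size} is invoked with the conservative value $M_n = 2B$ as in Remark~\ref{rmk:berry-essen}; a tighter analysis using residual Lipschitz properties of the trained network could shave logarithmic factors, but since the corollary only targets the rate displayed above this refinement is unnecessary.
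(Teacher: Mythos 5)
Your proposal is correct and takes essentially the same route the paper does: the distributional limit is a direct specialization of Theorem~\ref{thm:max_size} to the $\ell_1$-regularized estimators, and the Berry--Esseen rate follows by inserting the explicit network sup-norm bound $B = 4k^{1/2}\tau\,\xytrunc$ from Theorem~\ref{thm:max_power_l1} into the generic bound, via $G_n \lesssim B^3$ and $M_n \lesssim B$ (so $M_n^2\log n \lesssim B^2\log n$), exactly as sketched in Remark~\ref{rmk:berry-essen}. Your algebra correctly recovers the displayed rate, including the $\sqrt{d}\mapsto\tau$ substitution relative to Theorem~\ref{thm:max_power}.
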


The weak convergence of the proposed test statistic under the null hypothesis ensures asymptotic type I error control at a prespecified level $\alpha$. Moreover, Corollary \ref{cor:size_l1} establishes the Berry--Essen type bound for the Gaussian approximation of the test statistic under $\ell_1$ regularization, with convergence rate of order $n^{-1/2}\log(dn)\{ (\log(dn))^{1/2} + \log(n) \}$ if $\tau_\circ \lesssim 1$. 

\subsection{Power analysis}\label{subsec:power_l1}

We establish the consistency of the proposed test with $\ell_1$ regularization and appropriate network structures. 
Let $$\pwrate_n^{(1)}(k,\tau,d)=\tau k^{\frac{1}{2}} (\log(dn))^{\frac{3}{2}}  \big\{ n^{-\frac{1}{k+2}} + n^{-\frac{1}{2}}(dk)^{\frac{1}{2}} + n^{-\frac{2}{k+2}}\log(n) +  n^{-1}\log(n)dk \big\}.$$
\begin{theorem}\label{thm:max_power_l1}
Suppose Assumptions \ref{assump:sample_size} and \ref{assump:tail} hold, and $\Sigma^{(1)}$ satisfies Assumption \ref{assump:pd}.  For any $(k, \tau) \in \mathcal C^{(1)}$, take the corresponding ReLU neural network to be
$\mathcal N(D, (k, 2k, 6(k+1)N, \dots, 6(k+1)N, 1), s, 4k^{1/2}\tau \xytrunc )$, 
where $D \asymp \log n$, $s \asymp N D$, $N \ge 2^k \vee (4k^{1/2}\tau\xytrunc+1)e^k$ and 
\[ N \asymp n^{\frac{k}{2+k}} (\log n)^{-\frac{3k}{2+k}}.\]
If there exists a $(k_\star, \tau_\star) \in \mathcal C^{(1)}$ such that 
\begin{equation}\label{eq:power_signal_l1}
	PW_{k_\star, \tau_\star}^{(1)}(\mu, \nu) = \sup_{U \in \sparsestiefelstar} W(U_{\#}\mu, U_{\#}\nu)	\ge c_a \pwrate_n^{(1)}(k_\star, \tau_\star, d)
\end{equation}
for some sufficiently large constant $c_a>0$, then we have $\prob(T_{n}^{(1)} > q_{1-\alpha}) \to 1$ as $n \to \infty$.
\end{theorem}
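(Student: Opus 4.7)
The plan is to adapt the strategy behind Theorem~\ref{thm:max_power} while carefully tracking how the $\ell_1$ constraint on $U$ and the growing truncation box $[-\xytrunc,\xytrunc]^d$ enter the covering-number and approximation estimates. As a first step I would exploit the max structure: it suffices to show that the single index $j_\star\in\{1,\ldots,\mparam\}$ with $(k_{j_\star},\tau_{j_\star})=(k_\star,\tau_\star)$ already drives $T_{\nm}^{(1)}$ above $q_{1-\alpha}$. Under Assumption~\ref{assump:pd} the matrix $(\hat\Sigma^{(1)})^{-1/2}$ is (with probability $1-o(1)$) uniformly bounded, so the problem reduces to proving $(\nxtest\nytest/(\nxtest+\nytest))^{1/2}|S_\nm(\hat g_{j_\star}^{(1)})|\to\infty$ in probability.

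Next I would use a bias--fluctuation decomposition $S_\nm(\hat g_{j_\star}^{(1)})=\Delta_\nm+R_\nm^{\circ}$, where $\Delta_\nm:=\expect_\star\hat g_{j_\star}^{(1)}(X)-\expect_\star\hat g_{j_\star}^{(1)}(Y)$ is deterministic given $\data_{fit}$ and $R_\nm^{\circ}$ is a conditionally mean-zero average over $\data_{test}$. On the event $\truncevent$ of Lemma~\ref{lem:truncation_Fn} the network outputs are bounded by $4k_\star^{1/2}\tau_\star\xytrunc$, so $R_\nm^{\circ}$ has conditional sub-Gaussian norm of order $k_\star^{1/2}\tau_\star\xytrunc$ and Chebyshev yields $R_\nm^{\circ}=O_p(k_\star^{1/2}\tau_\star\xytrunc n^{-1/2})$, which is absorbed into $\pwrate_n^{(1)}(k_\star,\tau_\star,d)$. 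The heart of the argument is then a lower bound $|\Delta_\nm|\ge PW_{k_\star,\tau_\star}^{(1)}(\mu,\nu)-O(\pwrate_n^{(1)}(k_\star,\tau_\star,d))$ with probability tending to one.

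For this bound I would introduce the population and empirical objectives $\Psi(U,f)=\expect f(U^\t X)-\expect f(U^\t Y)$ and $\Psi_\nm(U,f)$ (the obvious sample analogue on $\data_{fit}$). By optimality of $\hat f_{j_\star}^{(1)}$ in \eqref{eq:hatf} and the Kantorovich--Rubinstein duality,
\[
\Delta_\nm = \Psi(\hat U_{k_\star,\tau_\star}^{(1)},\hat f_{j_\star}^{(1)}) \;\ge\; \sup_{f\in\mathcal N(\cdot)}\Psi(\hat U_{k_\star,\tau_\star}^{(1)},f) \;-\; 2\sup_{f\in\mathcal N(\cdot)}|\Psi-\Psi_\nm|(\hat U_{k_\star,\tau_\star}^{(1)},f),
\]
and the first term differs from $W((\hat U_{k_\star,\tau_\star}^{(1)})_\#\mu,(\hat U_{k_\star,\tau_\star}^{(1)})_\#\nu)$ by at most a Schmidt-Hieber-type ReLU approximation error for $1$-Lipschitz functions on $[-k_\star^{1/2}\tau_\star\xytrunc,k_\star^{1/2}\tau_\star\xytrunc]^{k_\star}$. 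A parallel use of the primal formulation \eqref{EQ:HATU_l1} combined with a uniform concentration of the empirical transport cost over $\sparsestiefelstar$ then gives $W((\hat U_{k_\star,\tau_\star}^{(1)})_\#\mu,(\hat U_{k_\star,\tau_\star}^{(1)})_\#\nu)\ge PW_{k_\star,\tau_\star}^{(1)}(\mu,\nu)-\varepsilon_U$ for a deviation $\varepsilon_U$ controlled in the next step.

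The main obstacle will be the two empirical-process suprema entering $\varepsilon_U$ and $\sup|\Psi-\Psi_\nm|$. The constraint $\|U\|_1\le\tau_\star$ calls for a Maurey-type sparse approximation on $\sparsestiefelstar$, giving a covering entropy that depends only logarithmically on $d$; combined with the truncation scale $\xytrunc=\eta\sqrt{2\log(dn)}$ from Lemma~\ref{lem:truncation_Fn} this produces the $\tau_\star k_\star^{1/2}(\log(dn))^{3/2}$ prefactor in $\pwrate_n^{(1)}$. The ReLU approximation of $1$-Lipschitz functions by the prescribed class $\mathcal N(D,\bp,s,4k_\star^{1/2}\tau_\star\xytrunc)$ contributes the $n^{-1/(k_\star+2)}$ term once $N\asymp n^{k_\star/(k_\star+2)}(\log n)^{-3k_\star/(k_\star+2)}$, while the remaining pieces of $\pwrate_n^{(1)}$ come from the variance and remainder terms in a Bernstein/Talagrand step together with the $(dk_\star)^{1/2}n^{-1/2}$ Rademacher complexity of the sparse Stiefel class. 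Piecing these ingredients together yields $|\Delta_\nm|\ge PW_{k_\star,\tau_\star}^{(1)}-c\,\pwrate_n^{(1)}$ on an event of probability $1-o(1)$, and the signal condition \eqref{eq:power_signal_l1} with sufficiently large $c_a$ forces $\sqrt{n}\,|\Delta_\nm|\to\infty$, proving $\prob(T_\nm^{(1)}>q_{1-\alpha})\to 1$.
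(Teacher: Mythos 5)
Your proposal reproduces the paper's strategy almost step for step: reduce to the single index $(k_\star,\tau_\star)$, decompose $S_n(\hat g^{(1)}_{j_\star})$ into a conditional bias $\Delta_n$ and a conditionally centered fluctuation $R_n^\circ$, control $R_n^\circ$ by the network output bound and the $\truncevent$ event, then lower bound $\Delta_n$ by a three-stage chain (optimality of $\hat f$ over $\mathcal N$, ReLU approximation of the truncated Lipschitz class, uniform concentration of the empirical transport cost over $\sparsestiefelstar$ to connect $\hat U$ with the oracle projection). This is the same decomposition the authors sketch via $S_n \approx \expect S_n \approx \expect \tilde S_n \ge \expect \tilde S_n(\tilde f,\hat U) \ge \expect \tilde S_n(\breve f,\hat U) \approx \expect \tilde S_n(f_{0,\star}^{(1)},U_{0,\star}^{(1)}) \approx PW^{(1)}_{k_\star,\tau_\star}$, with $\breve f$ and $\tilde f$ playing the roles of your two comparator functions; the auxiliary objects are only lightly renamed.

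One piece of bookkeeping is off, though it does not break the argument. You attribute the $\tau_\star k_\star^{1/2}(\log(dn))^{3/2}$ prefactor in $\pwrate_n^{(1)}$ to a Maurey-type covering of $\sparsestiefelstar$ that ``depends only logarithmically on $d$.'' That cannot be right: the bracketed factor in $\pwrate_n^{(1)}$ still carries $n^{-1/2}(dk)^{1/2}$ and $n^{-1}\log(n)\,dk$, i.e.\ the metric entropy exponent used for the projection class is still of order $dk$, exactly as for the unconstrained Stiefel manifold in $\pwrate_n(k,d)$. The change induced by the $\ell_1$ constraint is in the \emph{envelope}, not the dimension of the index set: for $U\in\sparsestiefelstar$ and $\|x\|_\infty\le\xytrunc$ one gets $\|U^\t x\|_2\lesssim \tau_\star\xytrunc$ (rather than $\lesssim (kd)^{1/2}\xytrunc$), and it is this reduced output bound $B=4k^{1/2}_\star\tau_\star\xytrunc$ in the network class and in the Lipschitz envelope that replaces $d^{1/2}$ by $\tau_\star$ in the leading factor. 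So you should run the empirical-process step with the ordinary Stiefel-manifold covering but the tightened envelope, rather than claiming a logarithmic-in-$d$ entropy; otherwise the claimed rate would be strictly better than $\pwrate_n^{(1)}$, which you then could not match.
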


\begin{proof}
We provide an outline of the proof of Theorem \ref{thm:max_power_l1}, with full technical details deferred to the Supplementary Material. As discussed before in Section \ref{subsec:power}, Lemma \ref{lem:truncation_Fn} shows that, for a given pair $(k, \tau)$, it is enough to focus on the smaller class $\mathcal F_{n}^{(1)} = \{ f\in \mathcal F: f(x) = f(\tilde x), \text{ where } \tilde x = P_{[-{ \tau \xytrunc} , { \tau \xytrunc} ]^k}(x) \}$.
Define
\begin{equation*}\label{eq:SnfU}
	S_n(f, U) = \frac{1}{\nxtest}\sum_{i\in\IXtest} f(U^\t X_i) - \frac{1}{\nytest}\sum_{j\in \IYtest}f(U^\t Y_j)
\end{equation*}
$$\tilde S_n(f, U) = \frac{1}{\nxfit} \sum_{i\in\IXfit} f(U^\t X_i) - \frac{1}{\nyfit} \sum_{j\in\IYfit} f(U^\t Y_j),$$
and the oracle solution pair 
$$(f_{0, \star}^{(1)}, U_{0, \star}^{(1)}) \in \mathop{\arg\max}_{f \in \mathcal F_{n}^{(1)}, U \in \sparsestiefelstar} \{ \expect f(U^\t X) - \expect f(U^\t Y) \}.$$

For simplicity, we denote $\hat U = \hat U_{k_\star, \tau_\star}^{(1)}$ and $\hat f = \hat f_{k_\star, \tau_\star}^{(1)}$.
Let 
\begin{equation*}\label{eq:breve-f}
	\breve f \in \mathop{\arg\max}_{f \in \mathcal F_{n}^{(1)}}  \bigg\{ \frac{1}{\nxfit}\sum_{i \in\IXfit} f(\hat U^\t X_i) - \frac{1}{\nyfit}\sum_{j\in\IYfit} f(\hat U^\t Y_j)  \bigg\}
\end{equation*}
and
$$\tilde f\in \mathop{\arg\min}_{f \in \mathcal N(D, \bp, s, B)} \|f - \breve f\|_\infty. $$

The key step is to establish that the quantity $S_n(\hat f, \hat U)$ provides an accurate approximation of $PW_{k_\star, \tau_\star}^{(1)}(\mu, \nu)$, as summarized below. This approximation forms the basis for establishing the consistency of the proposed test.
\begin{align*}
	S_n(\hat f, \hat U)
	\approx \mathbb{E}S_n(\hat f, \hat U)
	& \approx \mathbb{E}\tilde S_n(\hat f, \hat U)
	&& \text{(uniform law of large numbers)} \\
	& \ge \mathbb{E}\tilde S_n(\tilde f, \hat U)
	& & \text{(optimization of \eqref{eq:hatf})} \\
	& \ge \mathbb{E}\tilde S_n(\breve f, \hat U)
	& & \text{(NN approximation)} \\
	&\approx \mathbb{E}\tilde S_n(f_{0, \star}^{(1)}, U_{0, \star}^{(1)})
	&& \text{(uniform law of large numbers)} \\
	&\approx PW_{k_\star, \tau_\star}^{(1)}(\mu,\nu)
	&& \text{(truncation)}.
\end{align*}
\end{proof}

\begin{remark}\label{rmk:pwd_signal_l1}
If the corresponding optimal projection direction satisfies $ \|U_0\|_1 \le \tau_\star$, then $PW_{k_\star}(\mu, \nu) = PW_{k_\star, \tau_\star}^{(1)}(\mu, \nu)$. 
Note that the proposed test does not require the true projection $U_0$ to be sparse. Rather, the key requirement is the existence of a suitable function class (e.g., $\sparsestiefelstar$) that is sufficiently rich to distinguish the null hypothesis from the alternative.
\end{remark}

As indicated by \eqref{eq:power_signal_l1}, a larger $\tau_\star$ leads to higher sample complexity and, consequently, requires a stronger signal to distinguish from noise and achieve favorable power. Introducing $\ell_1$ regularization enhances adaptivity by allowing the method to exploit potential sparsity in the discriminative projection directions.
In addition, Theorem~\ref{thm:max_power_l1} suggests that power can be further improved by incorporating multiple projection dimensions $k$, a benefit that is corroborated by the numerical results in the Supplementary Material. Therefore, the max-type test statistic in \eqref{eq:max_stat_l1} kills two birds with one stone: It not only eliminates the need of tuning $k$ and  $\tau$ {(though a candidate set $\mathcal C^{(1)}$ still needs to be specified)}, but also potentially increases the power via aggregating the strength from multiple projection dimensions and regularization parameters. 
In contrast to existing methods that require both candidate set selection and parameter tuning, our approach requires only the former, substantially reducing methodological and computational complexity.

\begin{remark}\label{rmk:candidate-set}
The proposed test is consistent as long as there exists a parameter pair $(k_\star,\tau_\star)$ in the candidate set that satisfies the signal condition, highlighting its adaptive nature. This feature distinguishes our method from approaches that rely on pre-specified parameters or data-driven tuning. Finally, numerical results in the Supplementary Material demonstrate that the proposed max-type test statistic achieves a significant power gain over tuning-based alternatives.
\end{remark}


\begin{remark} \label{rmk:diff_wang}
Our testing strategy differs significantly from the permutation test in \citet{wang2021two} in several key aspects. First, our method involves estimation of the discriminative projection directions and the discriminative function using deep neural networks. The use of deep neural networks to compute the maximum has the potential to amplify the signal. To see this, from the proof of Theorem \ref{thm:max_power_l1}, with probability tending to 1,
\[ \max_{j=1,\dots,m} |S_n(\hat g_j^{(1)})| \ge \sup_{U \in \mathcal S_{d, k, \tau}^{(1)}} W(U_{\#}\mu, U_{\#}\nu)- c \pwrate_n^{(1)}(k, \tau, d),   \]
for any $(k, \tau) \in \mathcal C$ and some constant $c>0$. Using deep neural networks enables better approximation of the maximizer in \eqref{eq:hatf}, resulting in a larger $\max_{j=1,\dots,m} |S_n(\hat g_j^{(1)})|$ and thus achieving higher power.
Second, we establish the limiting null distribution of the proposed test statistic and adopt its quantile to determine the test threshold. Third, the proposed approach is adaptive to potentially sparse structures in projection directions within high-dimensional contexts by incorporating a flexible sparsity parameter $\tau$. Fourth, our approach bypasses the need to select the projection dimension $k$ and the sparsity parameter $\tau$ via a max-type test statistic. Notably, the max-type test statistic yields superior power performance compared to the method depending on parameter tuning or a single projection dimension, as discussed in the Supplementary Material. 
\end{remark}

\subsection{Computational algorithm}\label{subsec:alg_l1}
The optimization problem \eqref{EQ:HATU_l1} involves projection onto the intersection of the Stiefel manifold $\stiefel$ and the $\ell_1$ ball, which is computationally nontrivial. To address this challenge, we instead consider the following penalized optimization problem,
\begin{align}\label{EQ:HATU_l1pen}
\max\limits_{U \in \stiefel} \min\limits_{\pi \in \Gamma(\mufit, \nufit)} \sum_{i\in\IXfit}\sum_{j\in \IYfit} \pi_{ij}\|U^{\t}(X_{i}-Y_{j})\|_2 - \rho\|U\|_1,
\end{align}
where $\rho \ge 0$.
A key observation is that the estimator $\hat U_\rho$ obtained from \eqref{EQ:HATU_l1pen} serves as the optimal solution to the constrained problem \eqref{EQ:HATU_l1} with $\tau = \|\hat U_\rho\|_1$. This property allows the selection of the constraint parameter $\tau$ to be transformed into the choice of an appropriate penalty parameter $\rho$ in \eqref{EQ:HATU_l1pen}.
Therefore, in practice, for a given collection of the candidate parameter pairs $(k_j, \rho_j), j=1, \dots, m$, we compute the corresponding estimates $\hat g_j^{(1)}$ and evaluate the final test statistic in an analogous manner.

In the above, the Stiefel manifold $\mathcal S_{d,k}$ constraint and the nonsmooth penalty function add to the complexity of the problem \eqref{EQ:HATU_l1pen}.  Existing algorithms \citep{lin2020projection,huang2021riemannian} are primarily designed for computing the entropic regularized projection Wasserstein distance, and the nonsmooth penalty renders these algorithms less applicable for our task. In Appendix \ref{apx:alg}, we provide an alternating optimization algorithm for \eqref{EQ:HATU_l1pen}, which leads to satisfactory numeric performances.

\section{Power enhancement with  $\ell_0$ regularization}\label{sec:l0}
Let the constraint set for the projection in \eqref{EQ:HATU} be $\zerosparsestiefel = \{ U \in \stiefel: \|U\|_0 \le \varpi\}$ where $k \le \varpi \le kd$.

\subsection{Estimation and the test statistic}\label{subsec:statistic_l0}

Under $\ell_0$ regularization, the optimization problem in \eqref{EQ:HATU} becomes
\begin{align}\label{EQ:HATU_l0}
\hat U_{k, \varpi}^{(0)} \in \mathop{\arg\max}\limits_{U \in \zerosparsestiefel} \bigg\{ \min\limits_{\pi \in \Gamma(\mufit, \nufit)} \sum_{i\in\IXfit}\sum_{j\in \IYfit} \pi_{ij}\|U^{\t}(X_{i}-Y_{j})\|_2 \bigg\}.
\end{align}
Given $\hat U_{k, \varpi}^{(0)}$, we then use the form in \eqref{eq:hatf} from Step 2 to obtain the estimator $\hat f_{k, \varpi}^{(0)}$, with the network parameters suggested in Thereom \ref{thm:max_power_l0}. 

Given a candidate set of hyperparameters $\mathcal C^{(0)} = \{ (k_j, \varpi_j), j=1, \dots, \mparam \}$, let $\hat S_{n}^{(0)} = (S_n(\hat g_{1}^{(0)}), \dots, S_n(\hat g_{\mparam}^{(0)}))^\t$, where $\hat g_{j}^{(0)}$ is the corresponding estimator using the hyperparameter pair $(k_j,\varpi_j)$ for $j=1,\dots, \mparam$.  
The corresponding max-type test statistic is given by
\begin{equation}\label{eq:max_stat_l0}
T_{\nm}^{(0)} =  \max_{j=1,\dots,\mparam}  \bigg(\frac{\nxtest\nytest}{\nxtest+\nytest}\bigg)^{1/2} |e_j^{\t} (\hat{\Sigma}^{(0)})^{-1/2} \hat S_{\nm}^{(0)}|,
\end{equation}
where $\hat \Sigma^{(0)}$ is the sample estimate of 
\begin{equation}\label{eq:Sigma_l0}
\Sigma^{(0)} = \var\{(\nxtest\nytest/(\nxtest+\nytest))^{1/2}\hat S_{\nm}^{(0)} \mid \data_{fit}\}. 
\end{equation}
Given a significance level $\alpha\in (0,1)$, we reject the null hypothesis if $T_{\nm}^{(0)} > q_{1-\alpha}$, where $q_{1-\alpha}$ is the $1-\alpha$ quantile of $\max_{j=1,\dots,\mparam} |Z_j|$ with $Z \sim N(0, I_\mparam)$.

\begin{corollary}\label{cor:size_l0}
Suppose that Assumption \ref{assump:sample_size} holds and $\Sigma^{(0)}$ satisfies Assumption \ref{assump:pd}. Under the null hypothesis, we have $T_\nm^{(0)} \distconverge \max_{j=1,\dots,\mparam} |Z_j|$, as $\nplusm \to \infty$, where $Z \sim N(0, I_\mparam)$. Moreover, with the network structures specified in Theorem \ref{thm:max_power_l1}, we have
\[  \sup_{t \in \real} \bigg|\prob(T_{\nm}^{(0)} \le t) - \prob\bigg(\max_{j=1,\dots, \mparam} |Z_j| \le t\bigg) \bigg| \le c\frac{ k_\circ^{3/2}\eta^3 (\varpi_\circ\log(dn))^{3/2} + k_\circ\eta^2\varpi_\circ\log(dn)\log(n)}{n^{1/2} }  \]
for some positive constant $c>0$, where $k_\circ = \max\{k: (k, \varpi)\in \mathcal C^{(0)}\}$ and $\varpi_\circ = \max\{\varpi: (k, \varpi)\in \mathcal C^{(0)}\}$.
\end{corollary}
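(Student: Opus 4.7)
The plan is to deduce Corollary \ref{cor:size_l0} from the general Berry--Esseen bound in Theorem \ref{thm:max_size} by specializing the quantities $G_n$ and $M_n$ to the $\ell_0$ regularized setting. For the first assertion, weak convergence $T_\nm^{(0)}\distconverge \max_{j}|Z_j|$ under the null is obtained by applying the first half of Theorem \ref{thm:max_size} verbatim, with $\Sigma^{(0)}$ in place of $\Sigma$; Assumption \ref{assump:sample_size} and the hypothesis that $\Sigma^{(0)}$ satisfies Assumption \ref{assump:pd} are exactly what is required. So the nontrivial content is the quantitative rate.

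For the Berry--Esseen bound, I would work on the truncation event $E_n$ of Lemma \ref{lem:truncation_Fn}, on which $\|X\|_\infty\vee \|Y\|_\infty\le \xytrunc:=\eta\sqrt{2\log(dn)}$. The key geometric observation is that for any $\hat U\in\zerosparsestiefel$, each column $\hat u_l$ satisfies $\|\hat u_l\|_2=1$ and has at most $\varpi$ nonzero entries, so by Cauchy--Schwarz $\|\hat u_l\|_1\le \sqrt{\varpi}$. Hence on $E_n$,
\[
|\hat u_l^\t X|\le \|\hat u_l\|_1\|X\|_\infty\le \sqrt{\varpi}\,\xytrunc,
\qquad\|\hat U^\t X\|_2\le \sqrt{k\varpi}\,\xytrunc,
\]
so the network class specified in Theorem \ref{thm:max_power_l0} takes its uniform bound $B\asymp \sqrt{k\varpi}\,\xytrunc$. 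Consequently $|\hat g_j^{(0)}(X)|\le B$ almost surely, which immediately yields $M_\nm\lesssim \sqrt{k\varpi}\,\eta\sqrt{\log(dn)}$ and $G_n\lesssim B^3\lesssim (k\varpi)^{3/2}\eta^3(\log(dn))^{3/2}$ (with the same bounds for $Y$).

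Substituting these into Theorem \ref{thm:max_size}, the $G_n/\sqrt{n}$ term produces $k^{3/2}\eta^3(\varpi\log(dn))^{3/2}/\sqrt{n}$ and the $M_\nm^2\log n/\sqrt{n}$ term produces $k\eta^2\varpi\log(dn)\log(n)/\sqrt{n}$; taking the maximum over $j$ replaces $k,\varpi$ by $k_\circ,\varpi_\circ$, matching the stated bound. The residual probability $\prob(E_n^c)\le 2/n$ contributes $O(1/n)=o(1/\sqrt{n})$ and is absorbed. Overall the proof is a close parallel of Corollary \ref{cor:size_l1}, with the substitution $\tau\leadsto \sqrt{\varpi}$ dictated by the Cauchy--Schwarz passage from $\ell_0$ to $\ell_1$ constraints on unit vectors.

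The main technical point I anticipate is justifying the use of the columnwise sparsity bound $\|\hat u_l\|_0\le \varpi$ (which is obtained from the matrix-wise constraint $\|\hat U\|_0\le \varpi$ but is somewhat loose, since the total budget is split among the $k$ columns); this looseness is precisely what produces the factor $k^{3/2}$ rather than a $k$-free bound, and mirrors the behavior seen in the $\ell_1$ case. A secondary bookkeeping issue will be confirming that the network parameter $B$ stated in Theorem \ref{thm:max_power_l0} really satisfies $B\asymp \sqrt{k\varpi}\,\xytrunc$ so that the uniform bound on $\hat g_j^{(0)}$ holds unconditionally and the truncation step is needed only to motivate the choice of $B$, not to pay a conditional probability price inside the Berry--Esseen estimate itself.
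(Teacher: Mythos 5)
Your proposal is correct and takes essentially the paper's route: read off $M_\nm\lesssim B$ and $G_n\lesssim B^3$ directly from the explicit output bound $B\asymp (k\varpi)^{1/2}\xytrunc$ of the network class in the $\ell_0$ specification (note the paper's reference to Theorem~\ref{thm:max_power_l1} is a typo for Theorem~\ref{thm:max_power_l0}, as you implicitly corrected), then substitute into the Berry--Esseen bound of Theorem~\ref{thm:max_size} and take the max over $\mathcal C^{(0)}$. One small tightening worth registering: the Cauchy--Schwarz/truncation argument is only needed to motivate the choice of $B$ in the power analysis, and in fact $\sum_l\|\hat u_l\|_0\le\varpi$ gives $\|\hat U^\t X\|_2\le\sqrt{\varpi}\,\xytrunc$ on $E_n$ (not merely $\sqrt{k\varpi}\,\xytrunc$); the extra $\sqrt{k}$ in the corollary's rate simply originates from the paper's specification $B=4(k\varpi)^{1/2}\xytrunc$ applied uniformly across projection dimensions up to $k_\circ$, so your final bound is exactly right.
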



The weak convergence of the proposed test statistic under $H_0$ guarantees asymptotic control of the type I error at level $\alpha$. 
Additionally, Corollary \ref{cor:size_l0} establishes the Berry--Essen type bound for the Gaussian approximation of the test statistic under $\ell_0$ regularization, with the convergence rate of order $n^{-1/2}\log(dn)\{ (\log(dn))^{1/2} + \log(n) \}$ if $\varpi_\circ \lesssim 1$ . 

\subsection{Power analysis}\label{subsec:power_l0}
We study the power performance under $\ell_0$ regularization. 
Define
\[ \pwrate_n^{(0)}(k, \varpi, d) = (\varpi k)^{\frac{1}{2}} (\log(dn))^{\frac{3}{2}}  \big\{ n^{-\frac{1}{k+2}} + n^{-\frac{1}{2}}(k\varpi)^{\frac{1}{2}} + n^{-\frac{2}{k+2}}\log(n) + n^{-1}\log(n)k\varpi \big\}.  \]
\begin{theorem}\label{thm:max_power_l0}
Suppose Assumptions \ref{assump:sample_size} and  \ref{assump:tail} hold, and $\Sigma^{(0)}$ satisfies Assumption \ref{assump:pd}.  For any $(k, \varpi) \in \mathcal C$, take the corresponding ReLU neural network to be
$\mathcal N(D, (k, 2k, 6(k+1)N, \dots, 6(k+1)N, 1), s, 4(k\varpi)^{1/2}\xytrunc)$, 
where $D \asymp \log n$, $s \asymp N D$, $N \ge 2^k \vee (4(k\varpi)^{1/2}\xytrunc+1)e^k$ and 
\[ N \asymp n^{\frac{k}{2+k}} (\log n)^{-\frac{3k}{2+k}}.\]
If there exists a $(k_\star, \varpi_\star) \in \mathcal C^{(0)}$ such that 
\begin{equation}\label{eq:power_signal_l0}
	PW_{k_\star, \varpi_\star}^{(0)}(\mu, \nu) = \sup_{U \in \zerosparsestiefelstar} W(U_{\#}\mu, U_{\#}\nu) \ge c_a \pwrate_n^{(0)}(k_\star, \varpi_\star, d)
\end{equation}
for some sufficiently large constant $c_a>0$, then we have $\prob(T_{n}^{(0)} > q_{1-\alpha}) \to 1$ as $n \to \infty$.
\end{theorem}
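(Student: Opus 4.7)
My plan is to follow the same sandwich argument used to prove Theorem \ref{thm:max_power_l1}, with the scaling constants adjusted to reflect the $\ell_0$ constraint instead of the $\ell_1$ one. Let $j_\star$ index $(k_\star, \varpi_\star)$ in $\mathcal C^{(0)}$. It suffices to show that, with probability tending to $1$, $|S_\nm(\hat g_{j_\star}^{(0)})| \gtrsim \pwrate_n^{(0)}(k_\star, \varpi_\star, d)$. Combined with the pivotal limit in Corollary \ref{cor:size_l0}, Assumption \ref{assump:pd} on $\Sigma^{(0)}$, and the signal condition \eqref{eq:power_signal_l0} with $c_a$ large enough, this forces the $j_\star$-th standardized coordinate of $\hat S_\nm^{(0)}$ to diverge at rate $\sqrt{n}\,\pwrate_n^{(0)}(k_\star,\varpi_\star,d)$, so $T_\nm^{(0)}$ exceeds $q_{1-\alpha}$ with probability tending to $1$.

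First, I would invoke Lemma \ref{lem:truncation_Fn} to restrict attention to the high-probability event $\truncevent$. On $\truncevent$, the constraint $\|U\|_0 \le \varpi_\star$ combined with $\|U_{\cdot, i}\|_2 = 1$ and Cauchy--Schwarz yields the coordinate bound $|(U^\t x)_i| \le \sqrt{\varpi_\star}\,\xytrunc$ for every $i \le k_\star$, which also gives $|f(U^\t x)| \le \sqrt{k_\star\varpi_\star}\,\xytrunc$ for any $1$-Lipschitz $f$ vanishing at $0$. Accordingly, one can restrict the witness class to $\mathcal F_n^{(0)} = \{f \in \mathcal F : f(x) = f(P_{[-\sqrt{\varpi_\star}\xytrunc,\, \sqrt{\varpi_\star}\xytrunc]^{k_\star}}(x))\}$ and the projection class to $\zerosparsestiefelstar$, and then define the oracle pair $(f_{0,\star}^{(0)}, U_{0,\star}^{(0)}) \in \arg\max_{f \in \mathcal F_n^{(0)},\, U \in \zerosparsestiefelstar}\{\expect f(U^\t X) - \expect f(U^\t Y)\}$, which attains $PW_{k_\star,\varpi_\star}^{(0)}(\mu,\nu)$ by Kantorovich--Rubinstein duality up to a negligible truncation error.

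Writing $\hat U = \hat U_{k_\star,\varpi_\star}^{(0)}$, $\hat f = \hat f_{k_\star,\varpi_\star}^{(0)}$, $\breve f$ for the unrestricted Lipschitz maximizer against $\hat U$, and $\tilde f$ for its nearest element in $\mathcal N(D,\bp,s,B)$, I would then reproduce the chain from Theorem \ref{thm:max_power_l1}: $S_\nm(\hat f, \hat U) \approx \expect S_\nm(\hat f, \hat U) \approx \expect \tilde S_\nm(\hat f, \hat U) \ge \expect \tilde S_\nm(\tilde f, \hat U) \ge \expect \tilde S_\nm(\breve f, \hat U) - \|\tilde f - \breve f\|_\infty \approx \expect \tilde S_\nm(f_{0,\star}^{(0)}, U_{0,\star}^{(0)}) \approx PW_{k_\star,\varpi_\star}^{(0)}(\mu,\nu)$. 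The six steps are concentration on $\data_{test}$, a uniform law of large numbers on $\data_{fit}$, optimality of $\hat f$ in \eqref{eq:hatf}, the $k_\star$-variate Lipschitz approximation of \citet{schmidt2020nonparametric} on the box $[-\sqrt{\varpi_\star}\xytrunc, \sqrt{\varpi_\star}\xytrunc]^{k_\star}$, a second uniform law paired with oracle optimality, and the truncation bound of Lemma \ref{lem:truncation_Fn}. With the prescribed network width $N \asymp n^{k_\star/(k_\star+2)}(\log n)^{-3k_\star/(k_\star+2)}$, the approximation step contributes an error of order $n^{-1/(k_\star+2)}$ up to logarithmic factors, matching the first term in $\pwrate_n^{(0)}$.

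The main obstacle will be executing the uniform law of large numbers over $\mathcal F_n^{(0)} \times \zerosparsestiefelstar$ with the correct dependence on $d$, $k_\star$, and $\varpi_\star$. I would split its metric entropy into two parts: first union-bound over the $\binom{dk_\star}{\varpi_\star} \le (edk_\star/\varpi_\star)^{\varpi_\star}$ possible supports of $U$, contributing a $\sqrt{\varpi_\star \log(d/\varpi_\star)}$ combinatorial factor that is absorbed into $\sqrt{\log(dn)}$; then cover $\zerosparsestiefelstar$ restricted to each support together with the $k_\star$-variate Lipschitz class $\mathcal F_n^{(0)}$ on the truncation box via standard bracketing estimates. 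With the envelope $\sqrt{k_\star \varpi_\star}\,\xytrunc$ for $\hat g$ on $\truncevent$ fixing the scale, symmetrization followed by chaining produces a uniform fluctuation of order $(k_\star \varpi_\star)^{1/2}(\log(dn))^{3/2}\{n^{-1/2}(k_\star\varpi_\star)^{1/2} + n^{-1}\log(n)\,k_\star\varpi_\star\}$, accounting for the remaining pieces of $\pwrate_n^{(0)}$. Once every gap in the chain is controlled by a piece of $\pwrate_n^{(0)}(k_\star,\varpi_\star,d)$, choosing $c_a$ in \eqref{eq:power_signal_l0} large enough closes the argument.
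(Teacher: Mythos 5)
Your proposal is correct and mirrors the paper's own strategy: the proof of Theorem \ref{thm:max_power_l0} is obtained by rerunning the sandwich chain used for Theorem \ref{thm:max_power_l1} with the $\ell_1$ envelope $\tau\xytrunc$ replaced by the $\ell_0$ envelope $\sqrt{\varpi_\star}\xytrunc$ (via Cauchy--Schwarz over the at-most-$\varpi_\star$ nonzeros of each column of $U$) and with the entropy of $\sparsestiefelstar$ replaced by a union bound over the $\binom{dk_\star}{\varpi_\star}$ supports of $\zerosparsestiefelstar$, whose logarithm is absorbed into the $\log(dn)$ factor. You have identified precisely the two $\ell_0$-specific modifications, and the remaining steps (truncation, optimality of $\hat f$, NN approximation on the box $[-\sqrt{\varpi_\star}\xytrunc,\sqrt{\varpi_\star}\xytrunc]^{k_\star}$, uniform laws of large numbers) carry over unchanged.
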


Compared to the signal conditions in Theorem \ref{thm:max_power_l1},  the condition in \eqref{eq:power_signal_l0} is considerably weaker provided that $\varpi_\star \lesssim 1$, reflecting the reduced complexity induced by the $\ell_0$ sparsity structure. 
Moreover, Theorem \ref{thm:max_power_l0} suggests that incorporating multiple projection dimensions and sparsity parameters can improve power.

\begin{remark}
If the corresponding optimal projection direction satisfies $ \|U_0\|_0 \le \varpi_\star$, then $PW_{k_\star}(\mu, \nu) = PW_{k_\star, \varpi_\star}^{(0)}(\mu, \nu) $. 
Again, the proposed test does not require the true projection $U_0$ to be sparse. Rather, the essential requirement is the existence of a suitably rich function class capable of distinguishing the null hypothesis from the alternative.
\end{remark}

\subsection{Computational algorithm}\label{subsec:alg_l0}
Solving the optimization problem \eqref{EQ:HATU_l0} is computationally challenging due to the nonconvex nature of the $\ell_0$ regularization. To address this difficulty, we adopt a sequential $\ell_1$ approximation strategy to find the solution within $\zerosparsestiefel$. Specifically, we consider a sequence of $\ell_1$ sparisity parameters $\rho_j, j=1, \dots, L$, and for each $\rho_j$ solve the following optimization problem
\begin{align}
\hat U_{\rho_j} \in \mathop{\arg\max}\limits_{U \in \stiefel} \bigg\{ \min\limits_{\pi \in \Gamma(\mufit, \nufit)} \sum_{i\in\IXfit}\sum_{j\in \IYfit} \pi_{ij}\|U^{\t}(X_{i}-Y_{j})\|_2 - \rho_j\|U\|_1 \bigg\}.
\end{align}
For each resulting candidate, we retain the $\varpi$ largest entries in absolute value and then apply a QR decomposition to obtain an orthonormal projection. Since the QR step may densify the projection, we repeat this pruning procedure several times. Among these candidates, we select the projection that attains the largest empirical projection Wasserstein distance while satisfying the constraint, and take it as the final estimator $\hat U_{k, \varpi}^{(0)}$. Numerical results demonstrate the satisfactory performance of the proposed algorithm.

\section{Simulation studies}
\label{sec:sim}

Let $0_d$ and $1_d$ be $d$-dimensional vectors with elements all equal to 0 and 1, respectively. Let $I_d$ denote the $d\times d$ identity matrix. We consider the following four models of different characteristics.

\begin{itemize}

\item Model A (mean shift): let $X \sim N(0_d, \Sigma)$ and $Y \sim N(\mu, \Sigma)$ with $\Sigma =(r_{ij})_{i,j=1}^d$, where $r_{ij} = 0.5^{|i-j|}$ and $\mu = (\mu_1, \dots, \mu_d)^{\t}$ with $\mu_j = 0.8\beta j^{-3}$.

\item Model B (variance shift): let $X \sim N(0_d, \Sigma)$ and $Y \sim N(0_d, \Sigma')$, where $\Sigma$ is defined in Model A and $\Sigma'$ is defined by replacing the diagonal entries of $\Sigma$ with $\Sigma'_{jj}=1 + 8\beta j^{-3}$ for $j=1,\dots,d$.

\item Model C (same mean and variance, different marginals):  let $X \sim 0.5 N(-1_d, I_d) + 0.5 N(1_d, I_d)$ and $Y=(Y_1^{\t}, Y_2^{\t})^{\t}$ with independent components $Y_1 \sim N(0_s, 2R)$ and $Y_2 \sim 0.5 N(-1_{d-s}, I_{d-s}) + 0.5 N(1_{d-s}, I_{d-s})$, where $R = (r_{ij})_{i,j=1}^s$, $r_{ij} = 0.5^{|i-j|}$. Set $s=[ 5\beta ]$, where $[\cdot]$ denotes rounding to the nearest integer.

\item  Model D (same univariate marginals, different joint distributions): let $X \sim 0.5 N(-1_d, I_d) + 0.5 N(1_d, I_d)$ and $Y \sim 0.5N(-1_d, \Sigma) + 0.5N(1_d, \Sigma)$, where 
$$\Sigma = \bigg(\begin{array}{cc}
	R & 0 \\
	0 & I_{d-s}
\end{array}\bigg), $$ $R = (r_{ij})_{i,j=1}^s$ and  $r_{ij}=0.9^{|i-j|}$. Set $s=[ 100\beta]$.

\end{itemize}

In all considered models, we take different values $\beta = 0, 0.2, 0.4, \dots, 1$, representing different signal levels. Specifically,  $\beta=0$ corresponds to the true null hypothesis, and  $\beta \neq 0$ signifies the presence of an alternative hypothesis. The larger the value of $\beta$, the stronger the signal becomes. Models A and B exhibit decaying patterns in mean and variance shifts, which are more realistic and challenging scenarios in high-dimensional contexts. Models C and D encompass distributional differences beyond mean and variance shifts in non-Gaussian settings. 

We compare the performance of the proposed method with several nonparametric approaches for the high-dimensional two-sample testing, including MMD \citep{gretton2012kernel}, MMD with a deep kernel (MMD-D) \citep{liu2020learning}, the energy distance ($\mathrm{ED}_2$) \citep{szekely2004}, the projected Wasserstein distance (PW) \citep{wang2021two}, the kernel Fisher discriminant analysis (KFDA) witness test \citep{kubler2022witness}.
Specifically, we use the Gaussian kernel with the bandwidth parameter determined by the median heuristic, i.e., the median disitance between points in the aggregated samples \citep{gretton2012kernel} to implement MMD.  
For MMD-D, we use the implementations provided in \citet{liu2020learning}.
In the case of PW, we set the projected dimension to 3, which is adopted in \cite{wang2021two}.
To determine the test thresholds, a permutation approach is employed for the MMD, MMD-D, PW and $\mathrm{ED}_2$ methods, whereas the asymptotic threshold is adopted for KFDA.
Note that in high-dimensional scenarios, the MMD with permutation and the studentized MMD with asymptotic normal distributions yield similar performance, as demonstrated in \citet{gao2023two}. Moreover, as illustrated in Figures S.1 and S.2 in Section S.1.1 of the Supplementary Material, the asymptotic normality of the studentized MMD is not trustworthy for low-dimensional settings. Therefore, we only consider the permutation-based MMD for numerical comparisons.

\begin{figure*}[t]
\begin{center}
	\newcommand{\thiswidth}{0.4\linewidth}
	\newcommand{\thisgap}{1mm}
	\begin{tabular}{cc}
		\hspace{\thisgap}\includegraphics[width=\thiswidth]{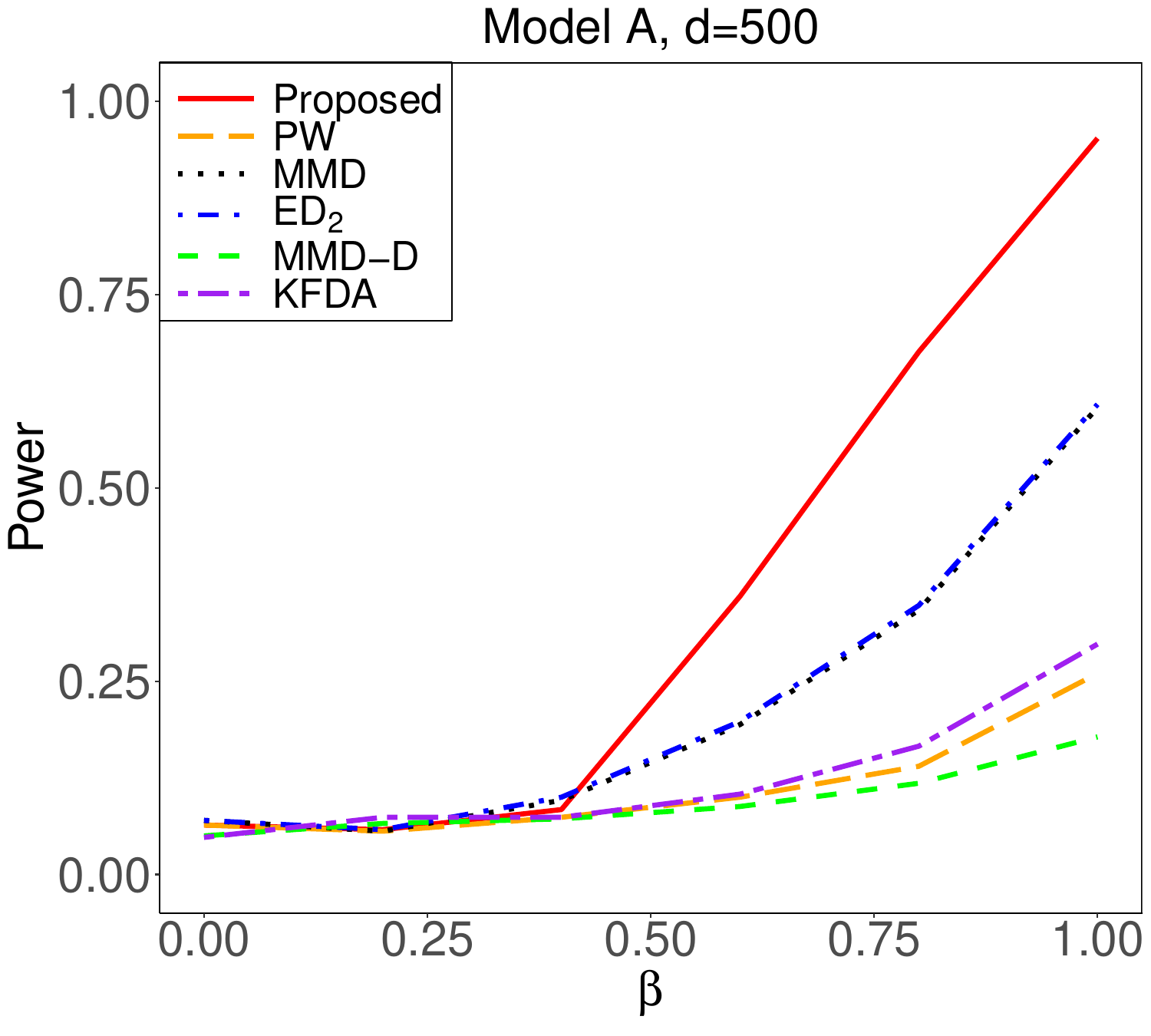} &
		\hspace{\thisgap}\includegraphics[width=\thiswidth]{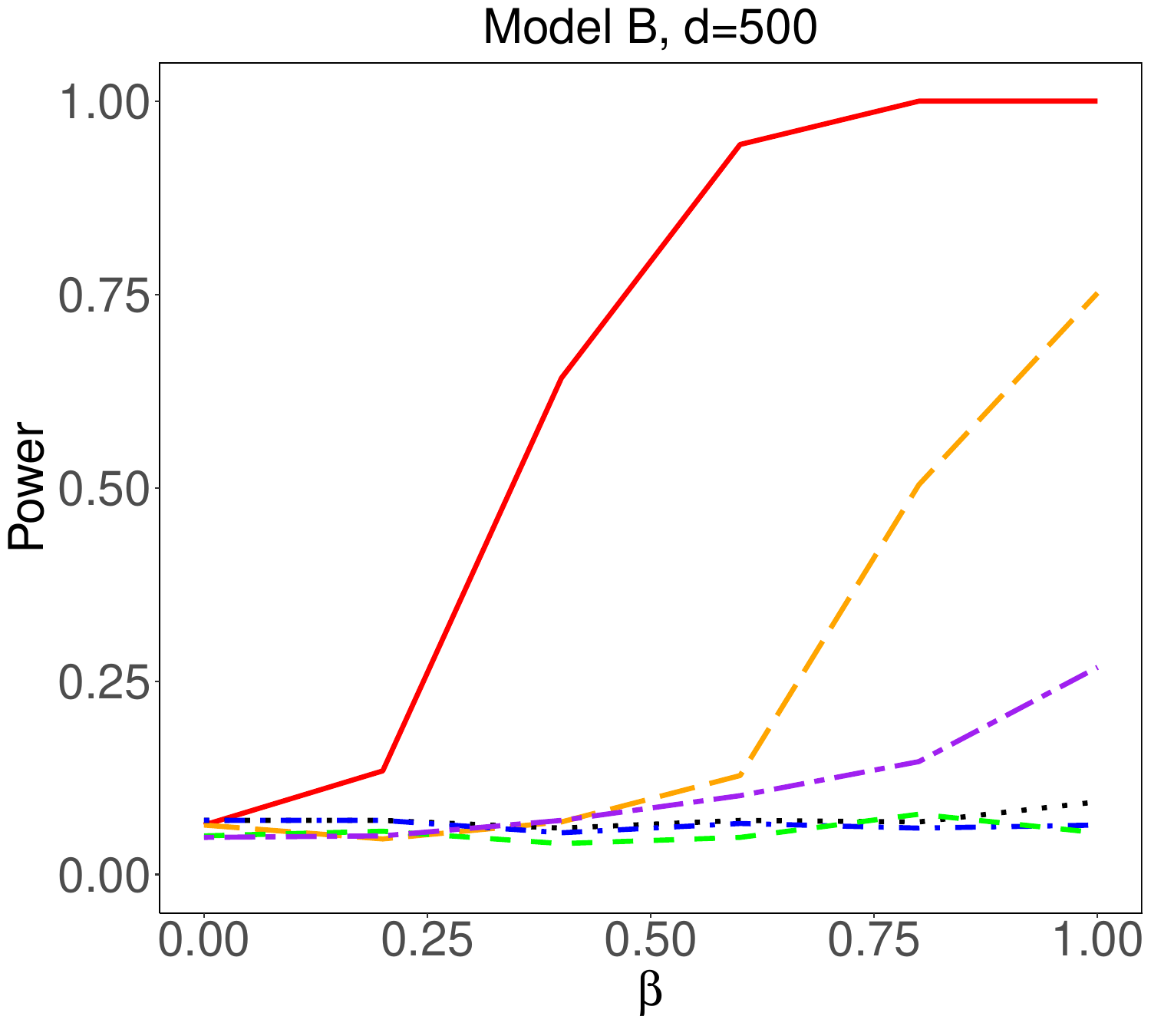} \\
		\hspace{\thisgap}\includegraphics[width=\thiswidth]{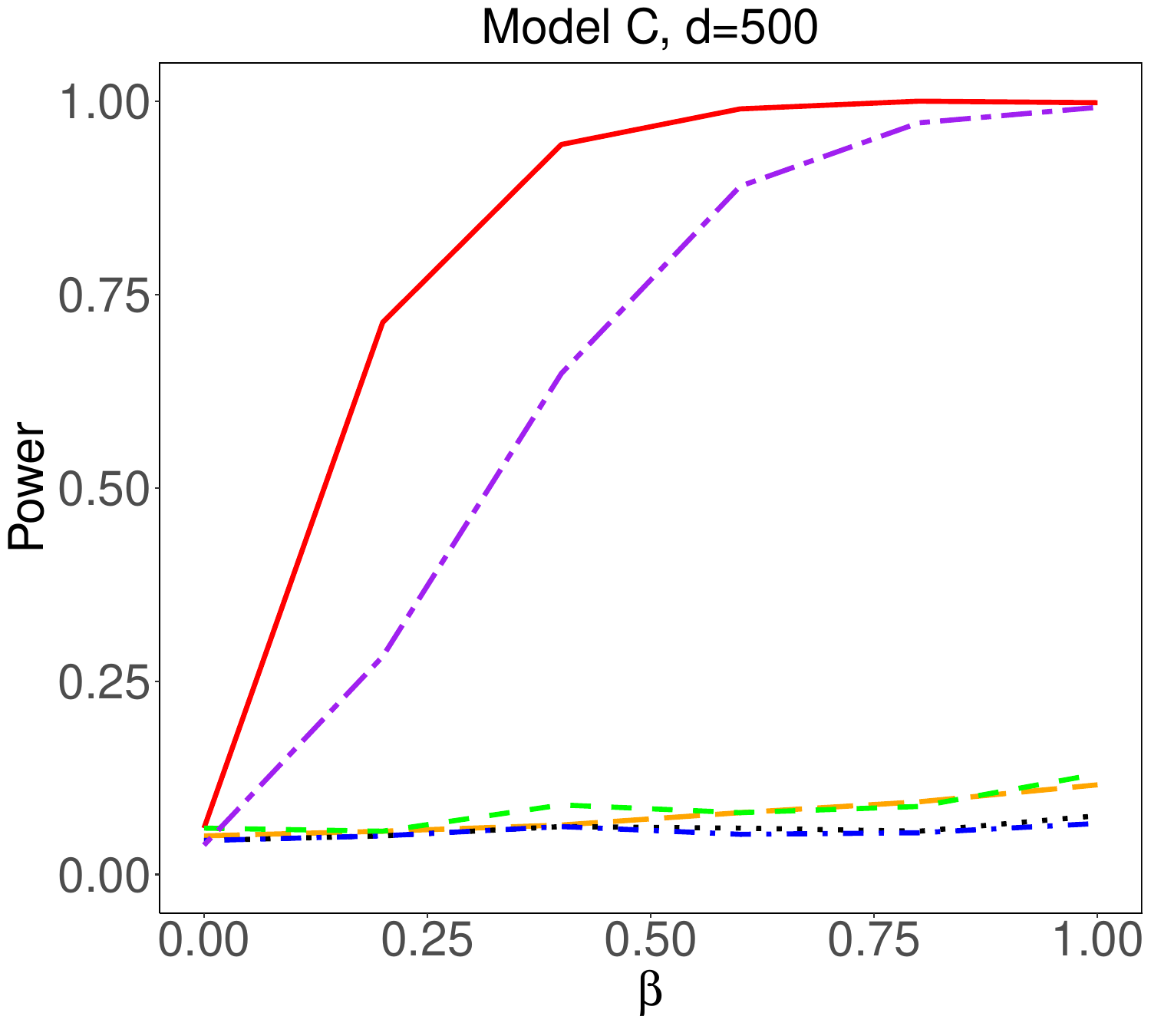} &
		\hspace{\thisgap}\includegraphics[width=\thiswidth]{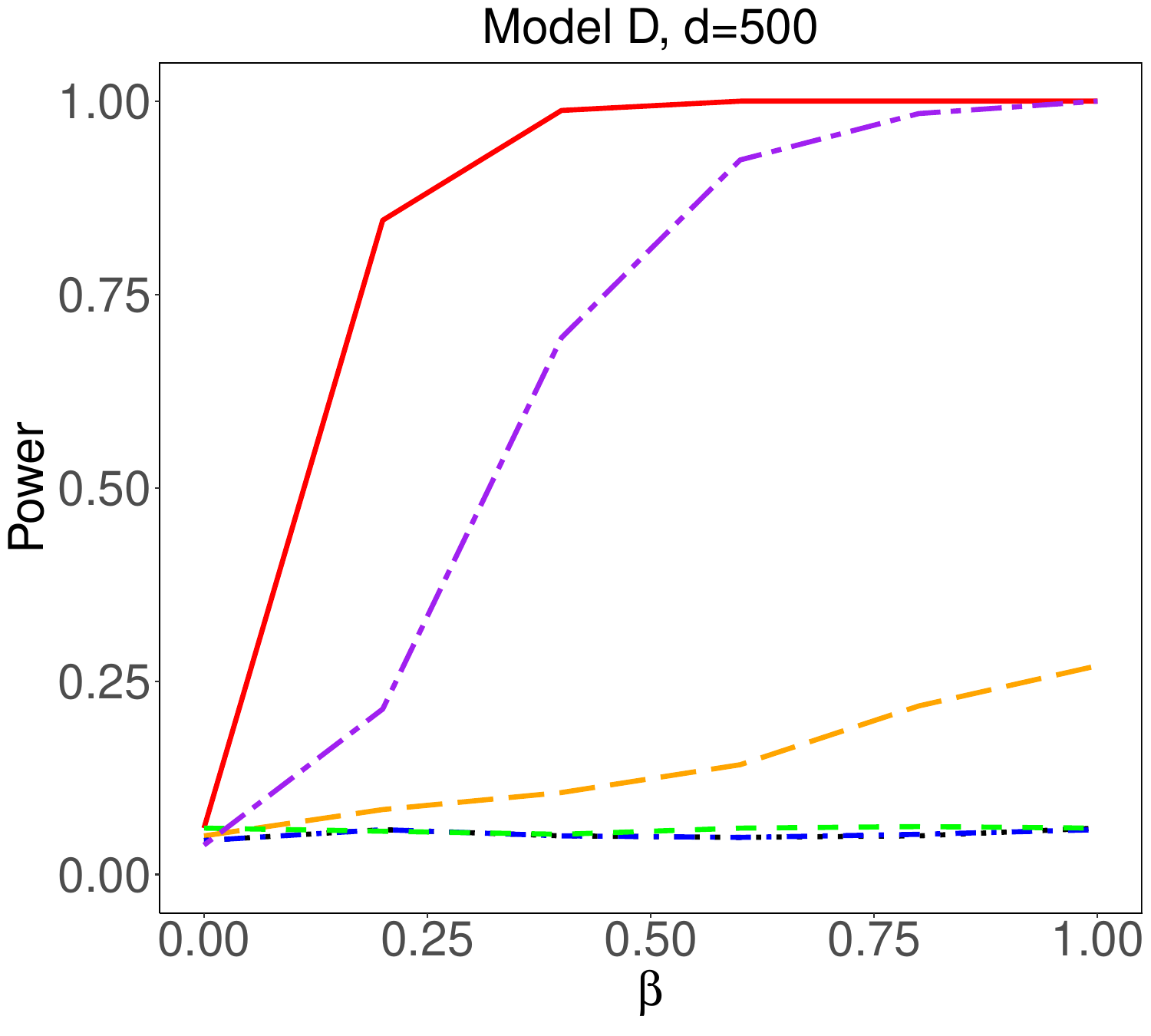}
	\end{tabular}
	\caption{Empirical size ($\beta=0$) and power $(\beta>0)$ of various methods in different models with $\nx=\ny=250$ and $d=500$.}
	\label{fig:sim_d500}
\end{center}
\end{figure*}

For the proposed method, we evenly split the data into $\data_{fit}$ for estimating $(\hat f,\hat U)$ and $\data_{test}$ for constructing the test statistic. The effect of the splitting ratio on the test performance is examined in Section S.1.5 of the Supplementary Material, where equal splitting yields the best overall performance. 

\subsection{Numerical results under $\ell_1$ regularization}\label{subsec:sim_l1}
{As discussed in Section \ref{subsec:alg_l1}, with  $\ell_1$ regularizaiton, we compute the projection directions via a penalized optimization approach and then construct the final test statistic accordingly.}
The candidate set consists of the combinations of $k=1, 5, 10$ and $\rho = 0.01, 0.1, 1$, covering regimes ranging from dense to sparse. According to Theorem \ref{thm:max_power}, under $n_x=n_y=250$, we use the ReLU neural network with 3 hidden layers, each layer containing an equal number of nodes. Specifically, the number of nodes per hidden layer is set to 50, 150, 250 for $k=1, 5, 10$, respectively.  To preserve the Lipschitz property, we practically adopt the spectral normalization technique \citep{miyato2018spectral}. In all settings, the significance level is $\alpha=0.05$, and we conduct 500 independent Monte Carlo repetitions to calculate the empirical percentage of rejections for each method. 

As depicted in Figure \ref{fig:sim_d500}, all the considered methods successfully control the type I error. In terms of power performance, in Model A where the mean shift occurs, the proposed method achieves the highest power when the signal is suitably large.
The MMD and $\mathrm{ED}_2$ yield comparable power, and both of them outperform the KFDA, PW and MMD-D. In Model B with distributional differences in variances, the proposed method yields the highest power. {The PW method shows reasonably increasing power as $\beta$ increases, while the power of the method KFDA displays a gradual increase. By comparison, the MMD, $\mathrm{ED}_2$ and MMD-D methods have limited power. 

Models C and D represent more challenging scenarios, where the methods MMD, MMD-D and $\mathrm{ED}_2$ exhibit restricted power in detecting the distributional differences. The PW displays little power in Model C, while its power increases slowly in Model D. KFDA exhibits excellent power in both models,  while the proposed method, enjoying the highest power, outperforms KFDA by a significant margin in these scenarios. 

To examine the effect of the increasing dimensionality on the power while keeping the signal fixed \citep{ramdas2015decreasing}, we consider several dimensions $d=60, 200, 500$. For the signal level, we set $\mu_j=0.64j^{-3}$ for Model A and $\Sigma'_{jj}=1+6.4j^{-3}$ for Model B. Additionally, we assign values $s=5$ and $s=60$ for Models C and D, respectively. As shown in Table \ref{tab:sim}, the proposed method consistently outperforms the other approaches in all Models. Furthermore, among all the considered approaches, the performance of the proposed test does not deteriorate much as the dimension increases. Notably, the KFDA delivers promising results in Models C and D, but its power decays rapidly in Models A and B. In contrast, the other methods exhibit a sharp decay of the power across all four models. 

\begin{table}[!t]
	\caption{The empirical power under different dimensions with $\nx=\ny=250$. \label{tab:sim}}
	\begin{tabular*}{\columnwidth}{@{\extracolsep{\fill}}cccccccc@{\extracolsep{\fill}}}
		\hline
		\multicolumn{2}{c}{} & Proposed & MMD & $\mathrm{ED}_{2}$ & MMD-D & KFDA & PW \\ \hline
		\multirow{3}{*}{Model A} & $d=60$ & 0.910 & 0.912  & 0.910 & 0.380 & 0.742 &  0.692 \\ 
		& $d=200$ & 0.832  & 0.608 & 0.614 & 0.192 & 0.330 &  0.378 \\ 
		& $d=500$ & 0.676   & 0.342 & 0.348 & 0.118 & 0.166 &  0.140  \\ \hline
		\multirow{3}{*}{Model B} & $d=60$ & 0.998  & 0.958  & 0.508 & 0.878 & 0.984  & 1.000  \\ 
		& $d=200$ & 0.996 &  0.140 & 0.072 & 0.066 & 0.498 & 0.958  \\ 
		& $d=500$ & 1.000  &  0.068 & 0.060 & 0.078 & 0.146  & 0.504  \\ \hline
		\multirow{3}{*}{Model C} & $d=60$ & 1.000  & 0.996  & 0.358 & 0.998 & 1.000  &  1.000  \\ 
		& $d=200$ & 1.000 & 0.144  & 0.086 & 0.566 & 1.000 &  0.293 \\ 
		& $d=500$ & 0.998  & 0.076  & 0.066 & 0.088 & 0.972 & 0.116  \\ \hline
		\multirow{3}{*}{Model D} & $d=60$ & 1.000   & 0.940  & 0.326 & 1.000 & 1.000 & 1.000  \\ 
		& $d=200$ & 1.000  & 0.052  & 0.050 & 0.192 & 1.000 & 0.994  \\ 
		& $d=500$ & 1.000  & 0.048  & 0.048 & 0.062 & 0.984 &  0.142 \\ 
		\hline
	\end{tabular*}
\end{table}

The test results of the proposed method may exhibit variability due to data splitting. To mitigate this variability, we aggregate p-values across multiple random splits \citep{cai2024test, dai2024significance}. In Section S.1.6, we evaluate several popular p-value aggregation methods. Among them, the Cauchy combination test \citep{liu2020cauchy}, Bonferroni correction, Bonferroni-Geometric \citep{vovk2020combining} and Bonferroni-Cauchy demonstrate superior performance. Compared to single-split results, proper aggregation of p-values can significantly enhance test power. As the number of random splits increases, power generally improves, while the gain becomes marginal beyond 5 splits. Using 5 to 10 random splits achieves a good balance between test performance and computational efficiency. Additional details are provided in Section S.1.6 of the Supplementary Material.

To sum up, the numerical comparisons illustrate the strength of the proposed method in detecting general distributional discrepancies. Moreover, its performance appears robust to the increasing dimensionality. 

\subsection{Numerical results under $\ell_0$ regularization} \label{subsec:sim_l0}

In this section, we present numerical results under $\ell_0$ regularization using the algorithm in Section \ref{subsec:alg_l0}. The candidate set $\mathcal C^{(0)}$ includes all combinations of $k=1, 5, 10$ and $\varpi = k, [kd^{1/2}], kd$, covering regimes from sparse to dense, where $[kd^{1/2}]$ denotes the integer part of $kd^{1/2}$.

\begin{table}[!t]
	\caption{The empirical percentage of rejections under different models with $\nx=\ny=250, d=500$. \label{tab:sim_l0}}
	\begin{tabular*}{\columnwidth}{@{\extracolsep{\fill}}ccccccc@{\extracolsep{\fill}}}
		\hline
		& Proposed & MMD & $\mathrm{ED}_{2}$ & MMD-D & KFDA & PW \\ \hline
		Gaussian, $H_0$  & 0.058 & 0.070  & 0.070 & 0.050 & 0.048 &  0.064 \\ \hline
		Non-Gaussian, $H_0$  & 0.070 & 0.044  & 0.044 & 0.060 & 0.038 &  0.050 \\ \hline
		A, $\beta=0.6$  & 0.586 & 0.194  & 0.198 & 0.088 & 0.104 &  0.100 \\ 
		A, $\beta=0.8$ & 0.934   & 0.342 & 0.348 & 0.118 & 0.166 &  0.140  \\ \hline
		B, $\beta=0.4$ & 0.886  & 0.060  & 0.054 & 0.040 & 0.070  & 0.068  \\
		B, $\beta=0.6$ & 0.994  &  0.070 & 0.066 & 0.048 & 0.102  & 0.128  \\ \hline
		C, $\beta=0.2$ & 0.830  & 0.050  & 0.050 & 0.056 & 0.282 & 0.056  \\ \hline
		D, $\beta=0.2$ & 0.932  & 0.058  & 0.058 & 0.056 & 0.214 &  0.084 \\ 
		\hline
	\end{tabular*}
\end{table}

As shown in Table \ref{tab:sim_l0}, the proposed test method under $\ell_0$ regularization substantially outperforms competing methods in terms of power, while successfully controlling the type I error.

\section{Real data}\label{sec:real-data}

Among brain tumors that originate from glial cells, Gliomas are the most common type, which  can be subdivided into grade II, grade III, and grade IV (glioblastoma multiforme, GBM), based on the level of aggressiveness and histological characteristics \citep{louis20072007,hsu2019identification}. 
To accelerate the comprehensive understanding of the genetics of cancer, the Cancer Genome Atlas (TCGA) project was launched by the National Institute of Health (NIH), providing publicly available cancer genomics datasets \citep{tomczak2015review} which can be accessed through the \texttt{R} package \texttt{cgdsr}. 
In the TCGA project, grade II and III are classified as lower grade glioma (LGG). 
DNA methylation is a  DNA-based alteration commonly occuring in cancer, which can provide critical information about the presence, development and prognosis of cancer \citep{kulis2010dna,mikeska2014dna}. 
Therefore, it is of great interest to investigate DNA methylation patterns in the GBM and LGG groups. 

To study the DNA methylation patterns, we use the DNA methylation data \citep{cerami2012cbio} from the lower grade glioma and glioblastoma multiforme studies in the TCGA project, focusing on candidate prognostic genes related to the brain cancer in the Human Pathology Atlas \citep{uhlen2017pathology}.   After excluding the missing values, the dataset consists of $\nx=511$ LGG and $\ny=150$ GBM tumor samples with $p=207$ genes.

The p-value of the proposed test under both $\ell_1$ and $\ell_0$ regularization is smaller than $0.001$, indicating a significant difference between these two groups in DNA methylation. The result is consistent with the conclusion of existing studies linking the heterogeneity of DNA methylation to the progression of LGG to GBM \citep{mazor2015dna,klughammer2018dna}. 
Moreover, we display the differences in mean and covariance in Figures \ref{fig:meandiff} and \ref{fig:heatmap}. As illustrated in Figure \ref{fig:meandiff}, the mean values differ significantly between the GBM and LGG groups across various gene variables, with the mean values of most coordinates in GBM being lower than those in LGG. In terms of covariance matrices, we present the heatmaps of covariance matrices for the subset of variables with the 50 largest univariate 1-Wasserstein distances for better visualization. As depicted in Figure \ref{fig:heatmap}, the covariance matrix of GBM appears sparser than that of LGG. The investigation of the mean and covariance reveals a clear  difference in DNA methylation patterns between GBM and LGG tumor samples.

\begin{figure}[!t]
	\centering
	\includegraphics[width=0.5\linewidth, height=0.4\linewidth]{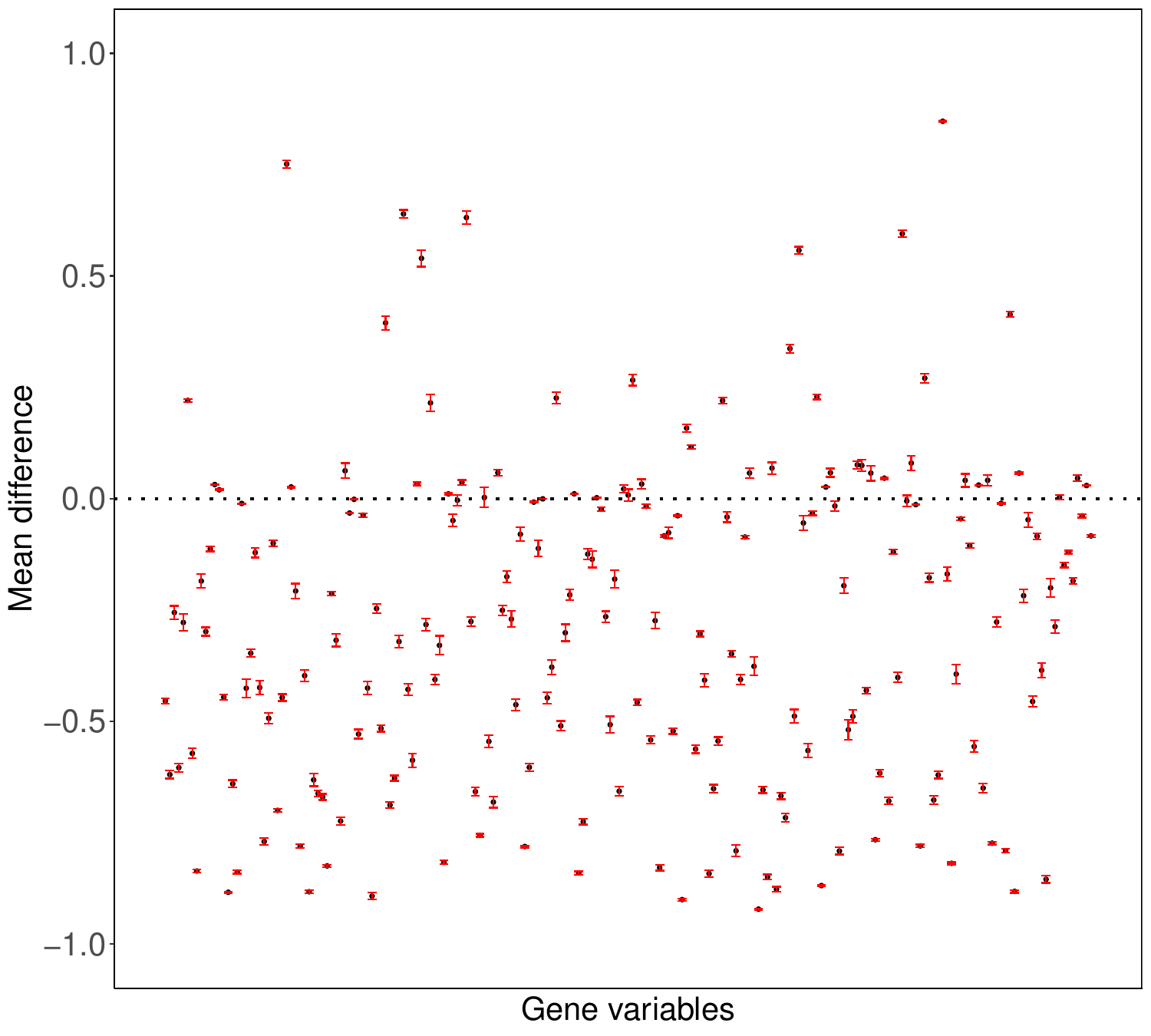} 
	\caption{The differences in mean between GBM and LGG, with error bars indicating the standard errors.}
	\label{fig:meandiff}
\end{figure}

\begin{figure}[!t]
	\centering
	\includegraphics[width=0.9\linewidth, height=0.4\linewidth]{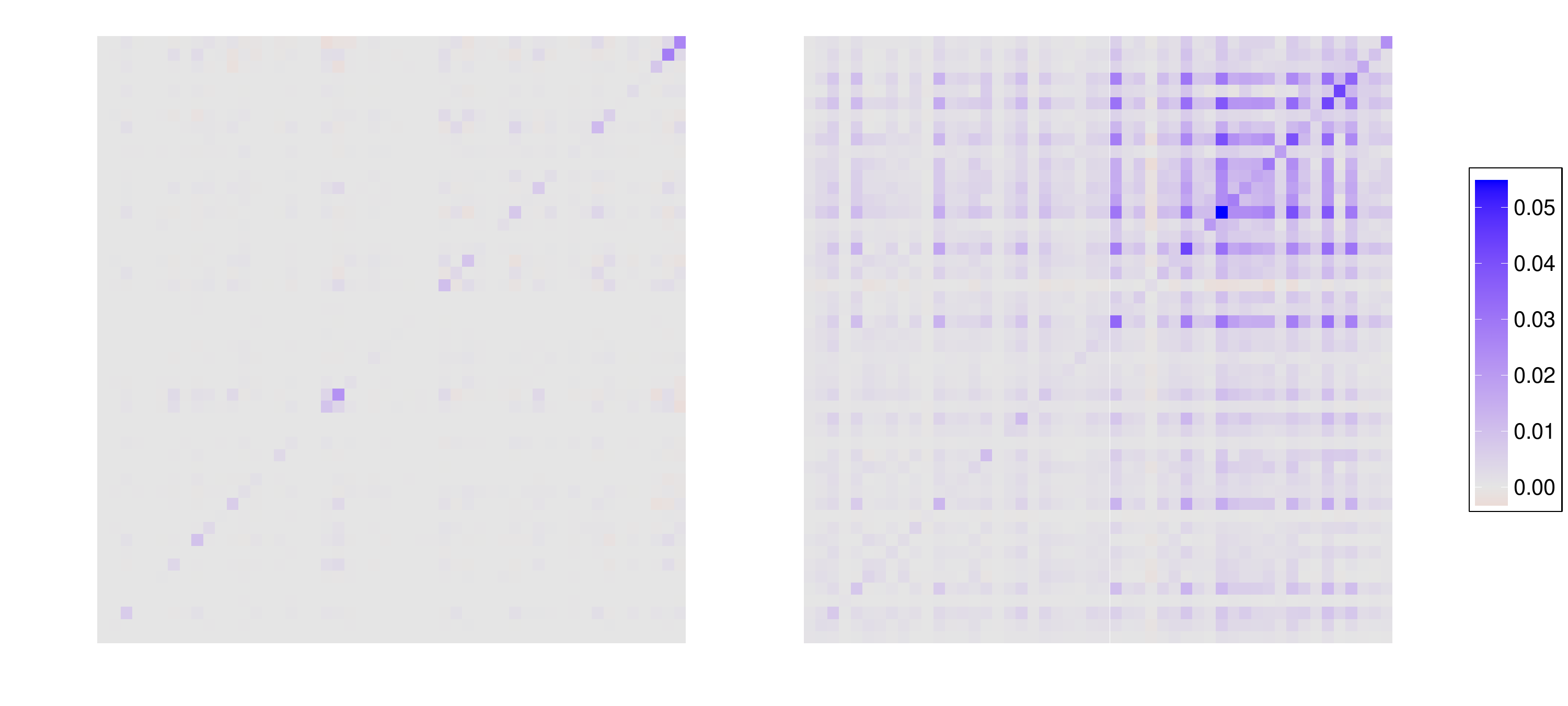} 
	\caption{The heatmaps of covariance matrices for the subset of variables with the 50 largest univariate 1-Wasserstein distances. Left: GBM; Right: LGG.}
	\label{fig:heatmap}
\end{figure}

\section{Conclusion}
\label{sec:conclusion}

In this paper, we employ deep neural networks and the projection  Wasserstein distance to develop a powerful test for high-dimensional two-sample testing problems. By aggregating strength from multiple values of hyperparameters via a max-type test statistic, the proposed method not only  avoids tuning crucial hyperparameters but also potentially increases the power. The test statistic is shown to asymptotically follow the distribution of the absolute maximum of a standard normal random vector, regardless of the estimation algorithms and the dimensionality. Such a pivotal statistic yields substantial computational efficiency when compared to data calibration methods such as permutation and bootstrapping. Despite the reduction in sample sizes due to data splitting, the proposed approach still attains excellent numerical performance, attributed to the utilization of multiple informative directions and advanced algorithms for learning discriminative functions via training deep neural networks.

While the sample splitting technique is handy for valid and effective inference in high-dimensional models \citep{rinaldo2019bootstrapping,cai2023asymptotic}, it is unclear whether such a loss of sample size is unavoidable.
Investigating this question is beyond the scope of this paper and left for future investigation.

\begin{appendix}
	\setcounter{equation}{0}
	\renewcommand{\theequation}{\Alph{section}.\arabic{equation}}

	\section{Optimization algorithm}\label{apx:alg}
	In this section, we provide detailed procedures for solving \eqref{EQ:HATU_l1pen}.
	Let $F(U; \pi) = l(U; \pi) + h(U)$ with $l(U; \pi) = -\sum_{i\in\IXfit}\sum_{j\in \IYfit} \pi_{ij}\|U^{\t}(X_{i}-Y_{j})\|_2$ and $h(U) = \rho \|U\|_1$.
	The optimal transport (OT) subproblem about $\pi$ is 
	\begin{equation}\label{op:ot}
		\max_{\pi \in \Gamma(\mu_{\nxfit}, \nu_{\nyfit})} l(U; \pi),
	\end{equation}
	which is solved by the linear programming, and the subproblem about $U$ is solved by the proximal gradient method for nonsmooth optimization over the Stiefel manifold (ManPG) \citep{chen2020proximal}.
	Specifically, the ManPG first computes a descent direction $V$ by solving the following problem
	\begin{equation} \label{eq:subop}
		\begin{split}
			\min_{V} & <\nabla l(U_t; \pi_{t+1}), V> + \frac{1}{2\gamma}\|V\|_F^2 + h(U_t+V) \\
			s.t. ~~& V^{\t}U_t + U_t^{\t}V = 0,
		\end{split}
	\end{equation}
	where $U_t$ is obtained in the $t$-th iteration, $\nabla l(U_t; \pi_{t+1})$ is the gradient of $l(\cdot ; \pi_{t+1})$, and $\gamma>0$ is a step size.
	Then, a retraction is performed to update $U$ on the Stiefel manifold. Algorithm \ref{alg:opt-U-pi} outlines this process. 
	
	Retraction operation is an important concept in manifold optimization; see \citet{absil2008optimization} for more details.
	There are many common retractions for the Stiefel manifold, including exponential mapping, the polar decomposition and the Cayley transformation. 
	For example, the exponential mapping \citep{edelman1998geometry} is given by,
	\[ \mathrm{Retr}_{U}(rV) = \left[\begin{array}{cc}
		U & Q 
	\end{array}\right]\mathrm{exp}\left(r\left[\begin{array}{cc}
		U^{\t}V & -R^{\t} \\
		R & 0
	\end{array}\right]\right)\left[\begin{array}{c}
		I_k \\
		0
	\end{array}\right], \]
	where $QR = (I_d - UU^{\t})V$ is the unique QR factorization.
	
	It remains to solve the subproblem \eqref{eq:subop}.
	Based on the Lagrangian function and the KKT system, we have
	\begin{equation}\label{eq:subop-lagrangian}
		\mathcal{A}_t(V(\Lambda)) = 0 ,
	\end{equation}
	and $V(\Lambda) = \mathrm{prox}_{\gamma h}(B(\Lambda)) - U_t$, 
	where $\mathcal{A}_t(V) = V^{\t}U_t + U_t^{\t}V$, $B(\Lambda) = U_t - \gamma(\nabla l(U_t; \pi_{t+1}) - 2U_t\Lambda)$, and $\Lambda$ is a $k \times k$ symmetric matrix. {For more details about the algorithm and the choice of hyperparameters}, readers can refer to \citet{chen2020proximal} and their public package. The semi-smooth Newton method \citep{xiao2018regularized} is used to solve \eqref{eq:subop-lagrangian} {about $\Lambda$.}
	
	\begin{algorithm}[t]
		\caption{The alternating optimization algorithm for solving \eqref{EQ:HATU_l1pen}.}
		\label{alg:opt-U-pi}
		\begin{algorithmic}[1]
			\Require $\{X_i: i\in\IXfit\}$, $\{Y_j: j \in \IYfit\}$, $U_0 \in \stiefel$, $\rho$, $t_{max}$, $\varepsilon$, $\gamma$, $\epsilon, \delta \in (0,1)$. 
			\For{$t=0$ {\bfseries to} $t_{max}$}
			\State Update $\pi_{t+1}$ given $U_t$ by solving the OT subproblem \eqref{op:ot}.
			\State Obtain $V_{t+1}$ by solving the subproblem (\ref{eq:subop}) given $\pi_{t+1}$.
			\State Set $r = 1$.
			\While {$F(\mathrm{Retr}_{U_t}(r V_{t+1}); \pi_{t+1}) > F(U_t; \pi_{t+1}) - \delta r \|V_{t+1}\|_F^2$}
			\State $r = \epsilon r$.
			\EndWhile
			\State Set $U_{t+1} = \mathrm{Retr}_{U_t}(r V_{t+1})$.
			\State Stop the iteration when $\|U_{t+1} - U_t\|_F/(1+\|U_t\|_F) \le \varepsilon$.
			\EndFor
		\end{algorithmic}
	\end{algorithm}
	
	\section{Remarks on $f_0$}\label{apx:f0}
	
	In Section \ref{sec:statistic}, we introduced the oracle discriminative function $f_0$. While this function does not directly enter our test procedure---it is introduced primarily to convey the intuition behind our method---we discuss below how $f_0$ can be conceptually chosen.
	
	In Section \ref{subsec:power}, we have showed that, for the purpose of empirical approximation of  $PW_k(\mu,\nu)$, it suffices to restrict attention to a smaller function class \( \mathcal{F}_n = \{ f \in \mathcal{F} : f(x) = f(\tilde{x}), \text{ where } \tilde{x} = P_{[-d^{1/2}\theta_n,  d^{1/2}\theta_n]^k}(x) \} \) that grows with $n$. 
	Given the compactness of \( \mathcal{F}_n \), the supremum
	\begin{equation}\label{eq:f0}
		f_{0} \in \underset{f \in \mathcal{F}_n}{\arg\max} \left\{ \mathbb{E} f(U_0^\t X) - \mathbb{E} f(U_0^\t Y) \right\}
	\end{equation}
	admits a solution. While this \( f_0 \) technically depends on \( n \), it is deterministic (non-random) and serves as the conceptual target that the empirical discriminative function $\hat f$ approximates.
	Note that the function $f_0$ in the above belongs to the 1-Lipschitz function class $\mathcal F$ since $\mathcal F_n\subset \mathcal F$.
	
	We also note that the function $f_0$ may not be unique. However, this does not affect the validity of our test or the theoretical guarantees. The key objective is to approximate the quantity $PW_{k}(\mu, \nu)$, rather than to recover a specific function $f_0$. In this sense, the essential step is the use of a neural network function class $\mathcal{N}$ (which grows with $n$) to approximate the class $\mathcal{F}_n$ (although $PW_k(\mu,\nu)$ depends on $\mathcal F$,  but $\mathcal F_n$ suffices here, as discussed in Section \ref{subsec:power}), and thereby approximate the quantity
	$$
	\sup_{f \in \mathcal{F}_n} \left\{ \mathbb{E} f(U_0^\t X) - \mathbb{E} f(U_0^\t Y) \right\}
	$$
	by $$
	\sup_{f \in \mathcal{N}} \left\{ \mathbb{E} f(U_0^\t X) - \mathbb{E} f(U_0^\t Y) \right\}.
	$$
	Thus, it is ultimately the expressive power of $\mathcal{N}$ that matters, rather than the approximation of a specific function $f_0$ within $\mathcal F_0$ or $\mathcal{F}$. 
	In this regard, Theorem \ref{thm:max_power} shows that the key statistic $S_n(\hat{f}, \hat{U})$,  constructed using the empirical function $\hat{f}$ which is based on the neural network, provides an accurate approximation of $PW_{k}(\mu, \nu)$.	

\end{appendix}

\section*{Acknowledgements}
	This research is partially supported by the NUS startup grant A-0004816-01-00.

\section*{Supplementary Material}
	The supplementary material contains additional simulation studies and technical proofs of the theoretical results. A \texttt{python} implementation of the proposed test can be found in \url{https://github.com/hxyXiaoyuHu/PWD-DNN}.

\bibliographystyle{asa} 
\bibliography{PWD}       

\end{document}